\documentclass[11pt]{article}

\addtolength{\oddsidemargin}{-0.1 \textwidth}
\addtolength{\textwidth}{0.2 \textwidth}
\addtolength{\topmargin}{-0.1 \textheight}
\addtolength{\textheight}{0.2 \textheight}

\usepackage{latexsym,dsfont,textcomp,amsmath}
\usepackage[english]{babel}
\usepackage[latin1]{inputenc}
\usepackage{enumerate,indentfirst}
\usepackage[colorlinks,citecolor=blue,urlcolor=blue]{hyperref}
\usepackage{amssymb}
\usepackage{amsthm}
\usepackage{hyperref}

\newtheorem{Theorem}{Theorem}[part]
\newtheorem{Definition}{Definition}[part]
\newtheorem{Proposition}{Proposition}[part]
\newtheorem{Assumption}{Assumption}[part]
\newtheorem{Lemma}{Lemma}[part]

\newtheorem{Remark}{Remark}[part]

\newcommand{\nc}{\newcommand}
\nc{\esssup}{\mathop{\mathrm{ess\,sup}}}
\nc{\essinf}{\mathop{\mathrm{ess\,inf}}}
\nc{\argmax}{\mathop{\mathrm{arg\,max}}}

\def \P{\mathbb{P}}
\def \N{\mathbb{N}}
\def \R{\mathbb{R}}

\def \E{\mathbb{E}}
\def \F{\mathbb{F}}
\def \G{\mathbb{G}}
\def \Q{\mathbb{Q}}

\def \1{\mathds{1}}

\def \Ac{{\cal A}}

\def \Fc{{\cal F}}
\def \Gc{{\cal G}}

\def \Sc{{\cal S}}

\def \Wb{\bar{W}}
\def \Mb{\bar{M}}

\def \Wt{\tilde{W}}

\begin{document}
\title{Portfolio Optimization in a Default Model under Full/Partial Information}
\author{Thomas LIM\\ \small{ENSIIE} \\ \small{Laboratoire d'Analyse et Probabilit\'es }\\ \small{Universit{\'e} d'Evry-Val-d'Essonne} \\ \small{e-mail: lim@ensiie..fr} \and Marie-Claire QUENEZ \\ \small{Laboratoire de Probabilit\'es et} \\ \small{Mod\`eles Al\'eatoires} \\\small{CNRS, UMR 7599}\\ \small{Universit{\'e} Paris 7}\\ \small{e-mail: quenez@math.jussieu.fr}}

\maketitle

\begin{abstract}
In this paper, we consider a financial market with an asset exposed to a risk inducing a jump in the asset price, and which can still be traded after the default time. We use a default-intensity modeling approach, and address in this incomplete market context the problem of maximization of expected utility from terminal wealth for logarithmic, power and exponential utility functions. We study this problem as a stochastic control problem both under full and partial information. Our contribution consists of showing that the optimal strategy can be obtained by a direct approach for the logarithmic utility function, and the value function for the power (resp. exponential) utility function can be determined as the minimal (resp. maximal) solution of a backward stochastic differential equation. For the partial information case, we show how the problem can be divided into two problems: a filtering problem and an optimization problem. We also study the indifference pricing approach to evaluate the price of a contingent claim in an incomplete market and the information price for an agent with insider information.
\end{abstract}

\noindent \textbf{Keywords} Optimal investment, default time, filtering, dynamic programming principle, backward stochastic differential equation, indifference price, information pricing, logarithmic utility, power utility, exponential utility.

\newpage

\section{Introduction}
\setcounter{equation}{0} \setcounter{Assumption}{0}
\setcounter{Theorem}{0} \setcounter{Proposition}{0}
\setcounter{Corollary}{0} \setcounter{Lemma}{0}
\setcounter{Definition}{0} \setcounter{Remark}{0}

One of the important problems in mathematical finance is the portfolio optimization problem when the investor wants to maximize the expected utility from terminal wealth. In this paper, we study this problem for the classical utility functions by considering a small investor on an incomplete financial market who can trade in a finite time interval $[0,T]$ by investing in risky stocks and a riskless bond. We assume that there exists a default time on the market, and this one generates a jump of stock price. The underlying traded asset is assumed to be a local martingale driven by a Brownian motion and a default indicating process. We assume that in the market there are two kinds of agents: the insider agents (the agents with insider information) and the classical agents (they only observe the asset prices and the default times). These situations are referred as full information and partial information. We will be interested not only in describing the investor's optimal utility, but also the strategies which he may follow to reach this goal. \\
\indent The utility maximization problem with full information has been largely studied in the literature. In the framework of a continuous-time model the problem was studied for the first time by Merton \cite{mer71}. Using the methods of stochastic optimal control, the author derives a nonlinear partial equation for the value function of the optimization problem. Some papers study this problem by using the dual problem, we can quote, for instance, Karatzas \emph{et al.} \cite{karlehshr87} for the case of complete financial models, and Karatzas \emph{et al.} \cite{karlehshrxue91} and Kramkov and Schachermayer \cite{krasch99} for the case of incomplete financial models, they find the solution of the original problem by convex duality. These papers are useful to prove the existence of an optimal strategy in the general case, but in practice it is difficult to find the optimal strategy with the dual method. Some others study the problem by using the dynamic programming principle, we can quote Jeanblanc and Pontier \cite{jeapon90} for a complete model with discontinuous prices, Bellamy \cite{bel01} in the case of a filtration generated by a Brownian motion and a Poisson measure, Hu \emph{et al.} \cite{huimkmul05} for an incomplete model in the case of a Brownian filtration,  and Jiao and Pham \cite{jiapha09} in the case with a default, in which the authors study the case before the default and the case after the default.\\
\indent Models with partial observation are essentially studied in the literature in a complete market framework. Detemple \cite{det86}, Dothan and Feldman \cite{dotfel86}, Gennotte \cite{gen86} use dynamic programming methods in a linear gaussian filtering. Lakner \cite{lak95,lak98} solves the optimization problem via a martingale approach and works out the special case of linear gaussian model. We mention that Frey and Runggaldier \cite{frerun99} and Lasry and Lions \cite{laslio99} study hedging problems in finance under restricted information. Pham and Quenez \cite{phaque01} treat the case of an incomplete stochastic volatility model. Callegaro \emph{et al.} \cite{calmasrun06} and Roland \cite{rol07} study the case of a market model with jumps.\\
\indent We first study the case of full information. For the logarithmic utility function, we use a direct approach, which allows to give an expression of the optimal strategy depending uniquely on the coefficients of the model satisfied by the stocks. For the power utility function, we look for a necessary condition characterizing the value function which is solution of the maximization problem. We show that this value function is the smallest solution of a backward stochastic differential equation (in short BSDE). We also give an approximation of the value function by a sequence of solutions of BSDEs. These solutions are the value functions of the maximization problem restricted to some bounded subsets of strategies. For the exponential utility function, we refer to the companion paper Lim and Quenez \cite{limque09}.\\
\indent In order to solve the partial information problem, the common way is to use the filtering theory, so as to reduce the stochastic control problem with partial information to one with full information as in Pham and Quenez \cite{phaque01} or Roland \cite{rol07}. Then we can apply the results of the full information problem.\\
\indent The outline of this paper is organized as follows. In Section 2, we describe the model and formulate the optimization problem. In Section 3, we solve the maximization problem for the logarithmic utility function with a direct approach.  In Section 4, we consider the power utility function by giving a characterization of the value function by a BSDE thanks to the dynamic programming principle, then we approximate the value function by a sequence of solutions of Lipschitz BSDEs. In Section 5, we use results from filtering theory to reduce the stochastic control problem with partial information to one with full information, then we apply the results of the full information problem to the partial information problem. Finally we study the indifference price for a contingent claim and the information price linked to the insider information.\\

\section{The model}\label{modele}
\setcounter{equation}{0} \setcounter{Assumption}{0}
\setcounter{Theorem}{0} \setcounter{Proposition}{0}
\setcounter{Corollary}{0} \setcounter{Lemma}{0}
\setcounter{Definition}{0} \setcounter{Remark}{0}

We start with a complete probability space $(\Omega, \mathcal{F},\mathbb{P})$ and $\mathbb{F}=\{\mathcal{F}_t\}_{0\leq t \leq T}$ a filtration in $\Fc$ satisfying the usual conditions (augmented and right-continuous). The terminal time $T < \infty$ is a fixed constant, and we assume throughout that all processes are defined on the finite time interval $[0,T]$. Suppose that this space is equipped with two stochastic processes: a Brownian motion $W$ and a jump process $N$ equal to $N_t=\mathds{1}_{\tau \leq t}$, where $\tau$ is a default time. We assume that this default time can appear at any time: $\P(\tau > t) > 0$. We denote by $M$ the compensated martingale of this process $N$ and by $\Lambda$ its compensator in the filtration $\F$. We assume that the compensator $\Lambda$ is absolutely continuous with respect to the Lebesgue measure, so that there exists a process $\lambda$ such that $\Lambda_t=\int_0^t\lambda_s ds$. We can see that 
\begin{equation}\label{M}
M_t=N_t-\int_0^t\lambda_sds,
\end{equation}
is an $\F$-martingale. In the sequel we assume that the process $\lambda$ is uniformly bounded. It should be noted that the construction of such process $N$ is fairly standard; see, for example, Bielecki and Rutkowski \cite{bierut04}. \\

\noindent We introduce some sets used throughout the paper:
\begin{itemize}
\item $L^{1,+}$ is the set of positive $\F$-adapted c\`ad-l\`ag processes on $[0,T]$ such that $\E[Y_t]<\infty$ for any $t\in [0,T]$.
\item $\Sc^2$  is the set of positive $\F$-adapted c\`ad-l\`ag processes on $[0,T]$ such that $\E[\sup_{t\in[0,T]} |Y_t|^2]<\infty$.
\item $L^2(W)$ (resp. $L^2_{loc}(W)$) is the set of $\F$-predictable processes on $[0,T]$ such that
\begin{equation*}
\E\Big[\int_0^T |Z_t|^2dt\Big]<\infty~~\text{(resp. }\int_0^T|Z_t|^2dt<\infty ~a.s.\text{ ).}
\end{equation*}
\item $L^2(M)$ (resp. $L^1_{loc}(M)$) is the set of $\F$-predictable processes on $[0,T]$ such that 
\begin{equation*}
\E\Big[\int_0^T\lambda_t |U_t|^2 dt\Big]<\infty~~\text{(resp. } \int_0^T\lambda_t |U_t| dt<\infty ~ a.s.\text{ ).}
\end{equation*}
\end{itemize}

We consider a financial market consisting of one risk-free asset, whose price process is assumed for simplicity to be equal to $1$ at each date, and one risky asset with a price process $S$ evolving according to the following diffusion
\begin{equation}
dS_t= S_{t^-} (\mu_t dt + \sigma_t dW_t + \beta_{t} dN_t),~~0 \leq t \leq T \,.
\end{equation}
We shall make the following standing assumptions:
\begin{Assumption}\label{hypothese coeff}
{\rm \begin{itemize}
\item $\mu$ and $\sigma$ are $\R$-valued uniformly bounded adapted stochastic processes, with  $\sigma_t > 0$ and $\theta_t = \mu_t / \sigma_t$ uniformly bounded.
\item $\beta$ is a $\R$-valued uniformly bounded predictable stochastic process, with $\beta_t > -1$ for any $t \in [0,\tau]$.
\end{itemize}
}\end{Assumption}
\noindent The last assumption implies that the process $S$ is positive.\\

An $\F$-predictable process $\pi=(\pi_t)_{0\leq t \leq T}$ is called trading strategy  if $\int\frac{\pi_t X_t}{S_t} dS_t$ is well defined where $X_t$ is the wealth at time $t$. The process $\pi$ describes the part of the wealth invested in the risky asset. The number of shares of the risky asset is given by $\frac{\pi_t X_t}{S_t}$. The wealth process $X^{x,\pi}$ of a self-financing trading strategy $\pi$ with initial capital $x$ satisfies the equation
\begin{equation}\label{dolean n}
X_t^{x,\pi}=x \exp \Big( \int_0^t \big( \pi_s \mu_s - \frac{|\pi_s \sigma_s|^2}{2} \big) ds + \int_0^t \pi_s \sigma_s dW_s \Big) (1+\pi_{\tau} \beta_{\tau} N_t) \,.
\end{equation}
For a given initial time $t$ and an initial capital $x$, the associated wealth process is denoted by $X^{t,x,\pi}$.\\

\indent Now let $U:\R\rightarrow \R$ be a utility function. The optimization problem consists of maximizing the expected utility from terminal wealth over the class $\Ac(x)$ of admissible portfolios (which will be defined in the sequel). More precisely, we want to characterize the value function of this problem, which is defined by
\begin{equation}  \label{V}
V(x)=\sup\limits_{\pi \in \Ac(x)} \E\big[U(X_T^{x,\pi})\big] \,,
\end{equation}
and we also want to give the optimal strategy when this one exists. We begin by the simple case when $U$ is the logarithmic utility function, then we study the power and exponential utility functions.

\section{Logarithmic utility function}\label{section logarithme}
\setcounter{equation}{0} \setcounter{Assumption}{0}
\setcounter{Theorem}{0} \setcounter{Proposition}{0}
\setcounter{Corollary}{0} \setcounter{Lemma}{0}
\setcounter{Definition}{0} \setcounter{Remark}{0}
In this section, we specify the meaning of optimality for trading strategies by stipulating that the agent wants to maximize his expected utility from his terminal wealth $X_T^{x,\pi}$ with respect to the logarithmic utility function
\begin{equation*}
U(x)=\log(x),~x>0 \,.
\end{equation*}
Our goal is to solve the following optimization problem
\begin{equation}\label{optimisation log}
V(x)=\sup_{\pi\in\Ac(x)}\E\big[\log(X^{x,\pi}_T)\big] \,,
\end{equation}
with $\Ac(x)$ the set of admissible portfolios defined by:
\begin{Definition} \label{admissible logarithmique} \emph{
The set of admissible trading strategies $\Ac(x)$ consists of all $\F$-predictable processes $\pi$ satisfying $\E [\int_0^T|\pi_t\sigma_t|^2dt ] + \E [\int_0^T\lambda_t|\log(1+\pi_t\beta_t)|dt ] < \infty$, and such that $\pi_t\beta_t> -1$ a.s. for any  $0\leq t \leq \tau$.
}\end{Definition}
We can see from (\ref{optimisation log}) that $V(x)=\log(x)+ V(1)$. Hence, we only study the case $x=1$. And for the sake of brevity, we shall denote $X^\pi_t$ instead of $X^{1,\pi}_t$ and $\Ac$ instead of $\Ac(1)$.
By definition of $X^\pi$ we obtain
\begin{equation}\label{dolean}
\log(X^{\pi}_t)=\int_0^t\big(\pi_s\mu_s-\frac{|\pi_s\sigma_s|^2}{2}\big)ds+\int_0^t\pi_s\sigma_sdW_s+\int_0^t\log(1+\pi_s\beta_{s})(dM_s+\lambda_sds) \,.
\end{equation}
\noindent As in \cite{krasch99}, we assume that $\sup_{\pi\in \Ac}\E[\log(X_T^{\pi})]<\infty$.\\

We add the following assumption on the coefficients to be able to solve the optimization problem (\ref{optimisation log}) directly:
\begin{Assumption}\emph{\label{coeff inverse borne}
The process $\beta^{-1}$ is uniformly bounded.
}\end{Assumption}
With this assumption, we get easily the value function $V(x)$ and the optimal strategy:
 \begin{Theorem}
 The solution of the optimization problem (\ref{optimisation log}) is given by 
 \begin{equation*}
 V(x)=\log(x)+\E\Big [\int_0^T\big(\hat{\pi}_t\mu_t-\frac{|\hat{\pi}_t\sigma_t|^2}{2}+\lambda_t\log(1+\hat{\pi}_t\beta_t)\big)dt\Big],
  \end{equation*}
 with $\hat{\pi}$ the optimal trading strategy defined by
  \begin{equation}\label{strategie optimal}
 \hat{\pi}_t=\left\{\begin{aligned}
& \frac{\mu_t}{2\sigma_t^2}-\frac{1}{2\beta_t}+\frac{\sqrt{(\mu_t\beta_t+\sigma_t^2)^2+4\lambda_t\beta_t^2\sigma_t^2}}{2\beta_t\sigma_t^2} ~~\text{ if} ~t\leq \tau~\text{and } \beta_t\neq 0,  \\ 
& \frac{\mu_t}{\sigma_t^2} ~~\text{ if} ~t\leq  \tau~\text{and } \beta_t= 0 ~\text{or } t > \tau.
\end{aligned}\right.
 \end{equation}
 \end{Theorem}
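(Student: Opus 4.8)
The plan is to reduce the optimization problem to a pointwise maximization. Starting from the representation \eqref{dolean} of $\log(X^\pi_t)$, for any admissible $\pi \in \Ac$ we take expectations at $t = T$. Because $\E[\int_0^T |\pi_s \sigma_s|^2 ds] < \infty$, the stochastic integral $\int_0^\cdot \pi_s \sigma_s dW_s$ is a true martingale with zero expectation; similarly, admissibility gives $\E[\int_0^T \lambda_s |\log(1+\pi_s\beta_s)| ds] < \infty$, so the compensated jump integral $\int_0^\cdot \log(1+\pi_s\beta_s) dM_s$ is a true martingale with zero expectation. Hence
\begin{equation*}
\E[\log(X^\pi_T)] = \E\Big[\int_0^T \ell(s,\pi_s)\, ds\Big], \qquad \ell(s,a) := a\mu_s - \tfrac{1}{2}|a\sigma_s|^2 + \lambda_s \log(1+a\beta_s),
\end{equation*}
where the integrand on the event $\{s > \tau\}$ (i.e.\ after the default) is understood with $\beta_s$ replaced effectively by $0$ since $N$ no longer jumps — more precisely, writing $dN_s = dM_s + \lambda_s ds$ and noting $\lambda_s ds$ contributes only on $[0,\tau]$ in the sense that the compensator stops at $\tau$. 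I would make this bookkeeping precise by splitting the time integral at $\tau$.

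Next, the key observation is that the problem now decouples across time: $\E[\int_0^T \ell(s,\pi_s) ds]$ is maximized by maximizing $\ell(s,a)$ over $a$ for (Lebesgue$\times\P$)-almost every $(s,\omega)$, provided the resulting maximizer is an admissible predictable process. So I would fix $(s,\omega)$ and maximize the concave function $a \mapsto \ell(s,a)$ on the domain $\{a : a\beta_s > -1\}$. When $\beta_s = 0$ (or after default), $\ell(s,a) = a\mu_s - \tfrac12 a^2\sigma_s^2$ is a quadratic with maximizer $a = \mu_s/\sigma_s^2$, giving the second branch of \eqref{strategie optimal}. When $\beta_s \neq 0$ and $s < \tau$, setting $\partial_a \ell(s,a) = \mu_s - a\sigma_s^2 + \lambda_s \beta_s/(1+a\beta_s) = 0$ leads, after clearing the denominator, to a quadratic equation in $a$; solving it and selecting the root that lies in the admissible region $\{a\beta_s > -1\}$ (the one with the $+$ sign in front of the square root, using $\beta_s > 0$ or handling the sign of $\beta_s$ carefully via Assumption on $\beta_t^{-1}$ bounded) yields precisely the first branch of \eqref{strategie optimal}. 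Concavity of $\ell(s,\cdot)$ (the $\log$ term is concave, the quadratic term is concave) guarantees this critical point is the global maximum.

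Then I would verify that $\hat\pi$ so defined is indeed in $\Ac$: it is predictable since it is built from the predictable coefficients $\mu, \sigma, \beta, \lambda$ by continuous operations (here Assumption~\ref{coeff inverse borne}, that $\beta_t^{-1}$ is uniformly bounded, together with the boundedness of $\mu, \sigma, \beta, \lambda$ and uniform ellipticity, ensures $\hat\pi$ is uniformly bounded, hence the integrability conditions $\E[\int_0^T|\hat\pi_t\sigma_t|^2 dt] + \E[\int_0^T \lambda_t|\log(1+\hat\pi_t\beta_t)| dt] < \infty$ hold and $\hat\pi_t\beta_t > -1$ can be checked from the explicit formula). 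Finally, plugging $\hat\pi$ back gives $V(1) = \E[\int_0^T \ell(t,\hat\pi_t) dt]$, and then $V(x) = \log x + V(1)$ by the scaling remark, which is the claimed formula.

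The main obstacle I anticipate is not the pointwise optimization — that is elementary calculus — but rather the measurable-selection/admissibility step: one must be sure the pointwise maximizer is a bona fide predictable process and that the bounds from Assumption~\ref{coeff inverse borne} genuinely force $\hat\pi_t\beta_t > -1$ and the required integrability, so that the interchange ``maximize under the integral'' is legitimate and the supremum is actually attained. A secondary delicate point is the careful treatment of the post-default regime and the stopping of the compensator at $\tau$, so that the decomposition of $\E[\log X^\pi_T]$ into $\E[\int_0^T \ell(t,\pi_t)dt]$ is correct on $\{t \ge \tau\}$; I would address this by working on $[0,T\wedge\tau]$ and $(T\wedge\tau, T]$ separately.
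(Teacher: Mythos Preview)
Your proposal is correct and follows essentially the same route as the paper: reduce $\E[\log X_T^\pi]$ to $\E\big[\int_0^T \ell(t,\pi_t)\,dt\big]$ via the representation \eqref{dolean} and the admissibility conditions, bound this above by the expectation of the pointwise supremum, compute the maximizer $\hat\pi$ explicitly, and then check via Assumption~\ref{coeff inverse borne} that $\hat\pi$ is admissible so that the upper bound is attained. Your concern about the post-default regime is unnecessary: since $N_t=\mathds{1}_{\{\tau\le t\}}$ has compensator that stops at $\tau$ (i.e.\ $\lambda_t=0$ for $t>\tau$), the term $\lambda_t\log(1+\pi_t\beta_t)$ vanishes automatically on $\{t\ge\tau\}$ and the second branch of \eqref{strategie optimal} follows without any splitting argument.
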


\begin{proof}
With (\ref{dolean}) and Definition \ref{admissible logarithmique}, we get the following expression for $V(1)$
\begin{equation*}
V(1)=\sup_{\pi\in\Ac}\E\Big[\int_0^T\big(\pi_t\mu_t-\frac{|\pi_t\sigma_t|^2}{2}+\lambda_t\log(1+\pi_t\beta_t)\big)dt\Big],
\end{equation*}
which implies that 
\begin{equation}\label{inegalite V}
V(1)\leq\E\Big[\int_0^T\esssup_{\pi_t\beta_t>-1}\Big\{\pi_t\mu_t-\frac{|\pi_t\sigma_t|^2}{2}+\lambda_t\log(1+\pi_t\beta_t)\Big\}dt\Big].
\end{equation}
For any $t\in[0,T]$ and any $\omega\in \Omega$, we have
\begin{equation*}
\esssup_{\pi_t\beta_t>-1}\Big\{\pi_t\mu_t-\frac{|\pi_t\sigma_t|^2}{2}+\lambda_t\log(1+\pi_t\beta_t)dt\Big\}=\hat{\pi}_t\mu_t-\frac{|\hat{\pi}_t\sigma_t|^2}{2}+\lambda_t\log(1+\hat{\pi}_t\beta_t),
\end{equation*}
with $\hat{\pi}$ defined by (\ref {strategie optimal}). Then from inequality (\ref{inegalite V}), we can see that
\begin{equation*}
V(1)\leq \E\Big [\int_0^T\big(\hat{\pi}_t\mu_t-\frac{|\hat{\pi}_t\sigma_t|^2}{2}+\lambda_t\log(1+\hat{\pi}_t\beta_t)\big)dt\Big].
\end{equation*}
It now is sufficient to show that the strategy $\hat{\pi}$ is admissible. It is clearly the case with Assumption \ref{coeff inverse borne}. Thus the previous inequality is an equality
\begin{equation*}
V(1)= \E\Big [\int_0^T\big(\hat{\pi}_t\mu_t-\frac{|\hat{\pi}_t\sigma_t|^2}{2}+\lambda_t\log(1+\hat{\pi}_t\beta_t)\big)dt\Big],
\end{equation*}
and the strategy $\hat{\pi}$ is optimal.
\end{proof}

\begin{Remark}
\emph{Assumption \ref{coeff inverse borne} can be reduced to 
\begin{equation*}
\E\Big[\int_0^T|\hat{\pi}_t\sigma_t|^2dt\big]+\E\big[\int_0^T\lambda_t|\log(1+\hat{\pi}_t\beta_t)|dt\Big] < \infty \,.
\end{equation*}}
\end{Remark}

\begin{Remark}
\emph{Recall that in the case without default, the optimal strategy is given by $\tilde \pi_t=\mu_t / \sigma^2_t$. Thus, in the case of default, the optimal strategy can be written under the form
\begin{equation*}
\hat{\pi}_t=\tilde \pi_t-\varepsilon _t,
\end{equation*}
where $\varepsilon _t$ is an additional term given by
\[ \varepsilon _t=\left\{\begin{aligned}
& \frac{\mu_t}{2\sigma_{t}^2}+\frac{1}{2\beta_t}-\frac{\sqrt{(\mu_t\beta_t+\sigma_{t}^2)^2+4\lambda_t\beta_t^2\sigma_{t}^2}}{2\beta_{t}\sigma_{t}^2} ~~\text{ if} ~t \leq \tau~\text{and } \beta_t\neq 0 \,,  \\ 
&0 ~~\text{ if} ~t \leq \tau~\text{and } \beta_t= 0 ~\text{or } t > \tau \,.
\end{aligned}\right.\]
Note that if we assume that $\beta$ is negative (resp. $\beta$ is positive), i.e. the asset price $S$ has a negative jump (resp. a positive jump) at default time $\tau$, $\varepsilon $ is positive (resp. negative), i.e. the agent has to invest less (resp. more) in the risky asset than in the case of a market without default.  }
\end{Remark}

\section{Power utility}\label{section puissance}
\setcounter{equation}{0} \setcounter{Assumption}{0}
\setcounter{Theorem}{0} \setcounter{Proposition}{0}
\setcounter{Corollary}{0} \setcounter{Lemma}{0}
\setcounter{Definition}{0} \setcounter{Remark}{0}

In this section, we keep the notation of Section \ref{section logarithme} and we shall study the case of the power utility function defined by
\begin{equation*}
U(x)=x^\gamma \,,~~x\geq 0 \,,~~\gamma\in (0,1) \,.
\end{equation*} 
In order to formulate the optimization problem we first define the set of admissible trading strategies.
\begin{Definition}\emph{
The set of admissible trading strategies $\Ac(x)$ consists of all $\F$-predictable processes $\pi$ such that $\int_0^T |\pi_t \sigma_{t}|^2 dt + \int_0^T  \lambda_t |\pi_t\beta_t| dt < \infty$ a.s. and such that $\pi_{\tau} \beta_{\tau} \geq -1$ a.s.
}\end{Definition}

\begin{Remark}
\emph{From expression (\ref{dolean n}), it is obvious that the condition $\pi_{\tau} \beta_{\tau} \geq -1$ a.s. is equivalent to $X^{x,\pi}_t \geq 0$ a.s. for any $0 \leq t \leq T$.}
\end{Remark}

The portfolio optimization problem consists of determining a predictable portfolio $\pi$ which attains the optimal value
\begin{equation}\label{optimisation puissance}
V(x) = \sup_{\pi\in\Ac(x)} \E \big[ (X^{x,\pi}_T)^\gamma \big] \,.
\end{equation}

\noindent Problem (\ref{optimisation puissance}) can be clearly written as $V(x)=x^\gamma V(1)$.
Therefore, it is sufficient to study the case $x=1$. As in \cite{krasch99}, we assume that $\sup_{\pi\in \Ac(1)}\E[(X_T^{1,\pi})^\gamma]<\infty$. To solve the optimization problem, we give a dynamic extension of the initial problem. For any initial time $t\in[0,T]$, we define the value function $J(t)$ by the following random variable
\begin{equation*}
J(t)=\esssup\limits_{\pi\in \Ac_t(1)}\E\Big[\big(X_T^{t,1,\pi}\big)^\gamma\Big|\Fc_t\Big],
\end{equation*}
with $\Ac_t(1)$ the set of $\F$-predictable processes $\pi=(\pi_s)_{t\leq s \leq T}$ such that $\int_t^T |\pi_s \sigma_{s}|^2 ds + \int_t^T |\pi_s \beta_s|\lambda_s ds < \infty$ a.s. and such that $\pi_{\tau} \beta_{\tau} \geq -1$ a.s.

For the sake of brevity, we shall denote $X_s^\pi$ (resp. $X_s^{t,\pi}$) instead of $X_s^{1,\pi}$ (resp. $X_s^{t,1,\pi}$) and $\Ac$ (resp. $\Ac_t$) instead of $\Ac(1)$ (resp. $\Ac_t(1)$).

In the sequel, we will use the martingale representation theorem to characterize the value function $J(t)$:
\begin{Lemma}\label{theoreme representation}
Any $(\P,\F)$-local martingale has the representation 
\begin{equation}\label{equation representation}
m_t=m_0+\int_0^t a_s dW_s + \int_0^t b_sdM_s , ~\forall\,t\in[0,T] ~ a.s.\,,
\end{equation}
where $a\in L^2_{loc}(W)$ and $b\in L^1_{loc}(M)$. If $m$ is a square integrable martingale, each term on the right-hand side of the representation (\ref{equation representation}) is square integrable. 
\end{Lemma}

\subsection{Optimization over bounded strategies}
Before studying the value function $J(t)$, we study the value functions $(J^k(t))_{k\in \N}$ defined by
\begin{equation}
J^k(t)=\esssup \limits_{\pi \in \Ac^k_t} \mathbb{E}\Big[(X_T^{t,\pi})^\gamma\Big| \mathcal{F}_t\Big] \,,~\forall\,t\in[0,T]~a.s.\,,
\label{Jk}
\end{equation}
where $\mathcal{A}^k_t$ is the set of strategies of $\Ac_t$ uniformly bounded by $k$. That means that the part of the wealth invested in the risky asset has to be bounded by a constant $k$, which makes sense in finance, because the ratio of the amount of money invested or borrowed to the wealth must be bounded according to the financial legislation. 

Let us fix $k \in \N$. We want to characterize the value function $J^k(t)$ by a BSDE.
For that we introduce for any $\pi\in\Ac^k$ the c\`{a}d-l\`{a}g process $J^\pi$ defined for any $t \in [0,T]$ by
\begin{equation*}
J^\pi_t=\E\big[(X^{t,\pi}_T)^\gamma\big|\Fc_t\big] \,.
\end{equation*}
The family $\{J^\pi, \pi\in\Ac^k\}$ is uniformly bounded:
\begin{Lemma}\label {Jk borne} 
The process $J^\pi$ is uniformly bounded by a constant independent of $\pi$.
\end{Lemma}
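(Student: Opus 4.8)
The plan is to bound $(X^{t,\pi}_T)^\gamma$ uniformly over $\pi\in\Ac^k$, then take conditional expectations. Starting from the explicit expression \eqref{dolean n} for the wealth process in the case $n=p=1$, write
\[
X^{t,\pi}_T=\exp\Big(\int_t^T\big(\pi_s\mu_s-\tfrac{1}{2}|\pi_s\sigma_s|^2\big)ds+\int_t^T\pi_s\sigma_sdW_s\Big)\big(1+\pi_{\tau}\beta_{\tau}\1_{t<\tau\leq T}\big).
\]
First I would handle the jump factor: since $\pi\in\Ac^k$ means $|\pi_s|\leq k$, and $\beta$ is uniformly bounded by some constant $C_\beta$ (Assumption \ref{hypothese coeff}), we have $0\leq 1+\pi_\tau\beta_\tau\leq 1+kC_\beta$, so the product term is bounded by the constant $(1+kC_\beta)$, independently of $\pi$ (but depending on $k$, which is fine since $k$ is fixed). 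Then $(X^{t,\pi}_T)^\gamma\leq (1+kC_\beta)^\gamma\exp\big(\gamma\int_t^T(\pi_s\mu_s-\tfrac12|\pi_s\sigma_s|^2)ds+\gamma\int_t^T\pi_s\sigma_sdW_s\big)$.

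Next I would control the drift in the exponent using boundedness of $\mu$, $\sigma$ and $|\pi|\leq k$: there is a constant $C_1=C_1(k)$ with $\big|\gamma\int_t^T(\pi_s\mu_s-\tfrac12|\pi_s\sigma_s|^2)ds\big|\leq C_1$. It remains to take conditional expectations and bound $\E\big[\exp(\gamma\int_t^T\pi_s\sigma_sdW_s)\big|\Fc_t\big]$. The clean way is to write $\exp\big(\gamma\int_t^T\pi_s\sigma_sdW_s\big)=\Ec_T\big(\gamma\int_t^\cdot\pi_s\sigma_sdW_s\big)\exp\big(\tfrac12\gamma^2\int_t^T|\pi_s\sigma_s|^2ds\big)$, where the stochastic exponential $\Ec$ is a true martingale on $[t,T]$ — Novikov's condition holds trivially since $\gamma\pi\sigma$ is bounded — so its conditional expectation given $\Fc_t$ equals $1$, while the correction term $\exp(\tfrac12\gamma^2\int_t^T|\pi_s\sigma_s|^2ds)$ is bounded by a constant $C_2=C_2(k)$ by boundedness of $\sigma$ and $|\pi|\leq k$. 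Combining, $J^\pi_t=\E[(X^{t,\pi}_T)^\gamma|\Fc_t]\leq (1+kC_\beta)^\gamma e^{C_1}C_2=:\kappa_k$, a constant depending only on $k$ and the model coefficients, not on $\pi$.

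I expect the only mild subtlety to be the bookkeeping with the jump factor and the justification that the relevant stochastic exponential is a genuine martingale; both are routine given the uniform boundedness assumptions on $\mu,\sigma,\beta,\lambda$ and the constraint $|\pi|\leq k$ defining $\Ac^k$. (In the case $t\geq\tau$ the jump factor is simply $1$ and the same estimate applies a fortiori; and the lower bound $J^\pi_t\geq 0$ is immediate from $X^{t,\pi}_T\geq 0$.) Hence $(J^\pi_t)$ is uniformly bounded.
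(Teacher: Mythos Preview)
Your argument is correct and follows essentially the same route as the paper: write $(X^{t,\pi}_T)^\gamma$ from \eqref{dolean n}, bound the jump factor by $(1+k|\beta|_\infty)^\gamma$, absorb the bounded drift, and handle the stochastic integral via the Dol\'eans exponential to obtain the final constant $(1+k|\beta|_\infty)^\gamma\exp\big((\gamma k|\mu|_\infty+\tfrac{\gamma^2}{2}k^2|\sigma|_\infty^2)T\big)$. The paper simply writes down this bound without spelling out the Novikov/martingale step, but the content is the same.
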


\begin{proof}
Fix $t\in[0,T]$. We have
\begin{equation*}
J_t^{\pi} =\E\Big[\exp\Big(\gamma\int_t^T(\mu_s\pi_s-\frac{|\sigma_{s}\pi_s|^2}{2})ds+\int_t^T\gamma\sigma_{s}\pi_sdW_s \Big)(1+\pi_{\tau}\beta_\tau\mathds{1}_{t<\tau\leq T})^\gamma\Big|\Fc_t\Big] \,,
\end{equation*}
since the coefficients $\mu$, $\sigma$ and $\beta$ are supposed to be uniformly bounded, we can see that
\begin{equation*}
J_t^{\pi}\leq(1+k\,|\beta|_\infty)^\gamma\,\exp\Big((\gamma\,k\,|\mu|_\infty+\gamma^2\frac{(k\,|\sigma|_\infty)^2}{2})\,T\Big) \,.
\end{equation*}
\end{proof}

Classically, for any $\pi\in \Ac^k$, the process $J^{\pi}$ can be shown to be the solution of a linear BSDE. More precisely, there exist $Z^{\pi}\in L^2(W)$ and $U^{\pi}\in L^2(M)$, such that $(J^{\pi}, Z^{\pi}, U^{\pi})$ is the solution in $\Sc^2\times L^2(W)\times L^2(M)$ of the linear BSDE with bounded coefficients
\begin{equation}
\left\{\begin{aligned}
 -\,dJ^{\pi}_t~=&
\Big\{\gamma \pi_t(\mu_tJ^{\pi}_t+\sigma_tZ^{\pi}_t) +\frac{\gamma(\gamma-1)}{2}\pi_t^2\sigma_t^2J^{\pi}_t+\lambda_t((1+\pi_t\beta_t)^\gamma-1)(J^{\pi}_t+U^{\pi}_t)\Big\}dt \\& - Z^{\pi}_tdW_t-U^{\pi}_tdM_t,\\
J^{\pi}_T~=& \;1.
\end{aligned}\right.
\label{edsr jkpi}
\end{equation}

Using the fact that $J^k(t)=\esssup_{\pi\in\Ac^k_t} J_t^\pi$ for any $t\in[0,T]$, we derive that $J^k(.)$ corresponds to the solution of a BSDE, whose generator is the essential supremum over $\pi$ of the generators of $\{J^\pi,\pi\in\Ac^k\}$. More precisely,
\begin{Proposition}
The following properties hold:
\begin{itemize}
\item Let $(Y, Z, U)$ be the solution in $\Sc^2\times L^2(W)\times L^2(M)$ of the following Lipschitz BSDE 
\begin{equation}
\left\{\begin{aligned}
 -\,dY_t=& \esssup\limits_{\pi\in\mathcal{A}^k}\Big\{\gamma \pi_t(\mu_tY_t+\sigma_tZ_t)+\frac{\gamma(\gamma-1)}{2}\pi_t^2\sigma_t^2Y_t 
+\lambda_t((1+\pi_t\beta_t)^\gamma-1)(Y_t+U_t)\Big\}dt \\ &-Z_tdW_t-U_tdM_t
,\\
Y_T=&\;1.
\label{EDSR lipschitz}
\end{aligned}\right.
\end{equation}
Then, $J^k(t)=Y_t$ a.s. for any $t\in[0,T]$.
\item There exists a unique optimal strategy for $J^k(0)=\sup_{\pi\in\Ac^k}\E[(X_T^{\pi})^\gamma]$.
\item A strategy $\hat{\pi}\in \Ac^k$ is optimal for $J^k(0)$ if and only if it attains the essential supremum of the generator in (\ref{EDSR lipschitz}) $dt\otimes d\P-a.e.$
\end{itemize}
\label{unicite}
\end{Proposition}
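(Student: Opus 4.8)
The plan is to combine a measurable selection argument for the generator of~(\ref{EDSR lipschitz}) with the comparison theorem for BSDEs driven by $W$ and $M$. For $(t,\omega)$ fixed and $a\in\R$, write
\[
h(t,a,y,z,u)=\gamma a(\mu_t y+\sigma_t z)+\tfrac{\gamma(\gamma-1)}{2}a^2\sigma_t^2 y+\lambda_t\big((1+a\beta_t)^\gamma-1\big)(y+u),
\]
so that the driver of the linear BSDE~(\ref{edsr jkpi}) is $h(t,\pi_t,\cdot)$ and the driver of~(\ref{EDSR lipschitz}) is $g(t,y,z,u)=\esssup_{\pi\in\Ac^k}h(t,\pi_t,y,z,u)$. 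Put $K_t(\omega)=\{a\in[-k,k]:a\beta_t\ge -1\}$, a nonempty compact interval containing $0$. I would first record two preliminary facts. First, since $\mu,\sigma,\beta,\lambda$ are bounded, $|a|\le k$ and $(1+a\beta_t)^\gamma\le(1+k|\beta|_\infty)^\gamma$ on $K_t$, each $h(t,a,\cdot)$ is Lipschitz in $(y,z,u)$ with a constant depending only on $k$ and the bounds of the coefficients; hence $g$, being a supremum of such functions, is Lipschitz with the same constant, and~(\ref{EDSR lipschitz}) has a unique solution $(Y,Z,U)\in\Sc^2\times L^2(W)\times L^2(M)$ by the standard theory of Lipschitz BSDEs with jumps. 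Second, for any predictable triple $(y_t,z_t,u_t)$ the map $a\mapsto h(t,a,y_t,z_t,u_t)$ is continuous on the compact $K_t$, so $\esssup_{\pi\in\Ac^k}h(t,\pi_t,y_t,z_t,u_t)=\max_{a\in K_t}h(t,a,y_t,z_t,u_t)$, the maximum is attained, and there is a predictable selector $\hat{\pi}$ with $\hat{\pi}_t\in K_t$ realizing it; such $\hat{\pi}$ lies in $\Ac^k$ since the integrability conditions defining $\Ac^k$ are automatic for a bounded strategy.

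For the first bullet, and the existence parts of the other two, I would argue as follows. For every $\pi\in\Ac^k$ one has $h(t,\pi_t,y,z,u)\le g(t,y,z,u)$ pointwise, so the driver of~(\ref{edsr jkpi}) is dominated by that of~(\ref{EDSR lipschitz}); both BSDEs have terminal value $1$, hence the comparison theorem for BSDEs with jumps gives $J^\pi_t\le Y_t$, whence $J^k(t)=\esssup_{\pi\in\Ac^k}J^\pi_t\le Y_t$. Conversely, apply the selection fact at the solution $(Y,Z,U)$ to get $\hat{\pi}\in\Ac^k$ with $h(t,\hat{\pi}_t,Y_t,Z_t,U_t)=g(t,Y_t,Z_t,U_t)$, $dt\otimes d\P$-a.e.; then $(Y,Z,U)$ solves the linear BSDE~(\ref{edsr jkpi}) associated with $\hat{\pi}$, and by uniqueness of its solution $Y_t=J^{\hat{\pi}}_t\le J^k(t)$. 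Therefore $Y_t=J^k(t)=J^{\hat{\pi}}_t$; in particular $Y_0=J^k(0)=\E[(X_T^{\hat{\pi}})^\gamma]$, so $\hat{\pi}$ is optimal for $J^k(0)$. This proves the first bullet, the existence claim of the second, and the ``if'' implication of the third. Taking $\pi\equiv 0$ also gives $Y_t\ge J^0_t=1$.

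For the ``only if'' implication and uniqueness, let $\hat{\pi}\in\Ac^k$ be optimal, so $J^{\hat{\pi}}_0=Y_0$ and $J^{\hat{\pi}}_t\le Y_t$ for all $t$. The difference $D_t=Y_t-J^{\hat{\pi}}_t\ge 0$ has zero terminal value and driver $g(t,Y_t,Z_t,U_t)-h(t,\hat{\pi}_t,J^{\hat{\pi}}_t,Z^{\hat{\pi}}_t,U^{\hat{\pi}}_t)=\Delta_t+\big(h(t,\hat{\pi}_t,Y_t,Z_t,U_t)-h(t,\hat{\pi}_t,J^{\hat{\pi}}_t,Z^{\hat{\pi}}_t,U^{\hat{\pi}}_t)\big)$, where $\Delta_t:=g(t,Y_t,Z_t,U_t)-h(t,\hat{\pi}_t,Y_t,Z_t,U_t)\ge 0$; since $h(t,\hat{\pi}_t,\cdot)$ is affine with bounded coefficients, $D$ solves a linear BSDE with bounded coefficients, nonnegative source $\Delta$ and zero terminal value, so that $D_0=\E\big[\int_0^T\Gamma_s\Delta_s\,ds\big]$ for a suitable positive process $\Gamma$. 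As $D_0=0$ and $\Delta\ge 0$, this forces $\Delta_t=0$ $dt\otimes d\P$-a.e., i.e.\ $\hat{\pi}$ attains the essential supremum in~(\ref{EDSR lipschitz}); this is the ``only if'' implication. For uniqueness, any optimal strategy then maximizes $a\mapsto h(t,a,Y_t,Z_t,U_t)$ over $K_t$, $dt\otimes d\P$-a.e.; and for fixed $(t,\omega)$ this map is strictly concave, because its quadratic term has coefficient $\tfrac{\gamma(1-\gamma)}{2}\sigma_t^2 Y_t>0$ (using $\gamma\in(0,1)$, the ellipticity $\sigma_t^2\ge\epsilon>0$, and $Y_t\ge 1>0$), while $a\mapsto(1+a\beta_t)^\gamma$ is concave on $K_t$ and $\lambda_t(Y_t+U_t)\ge 0$ (since $Y_{t^-}+U_t$ is the value $J^k$ takes right after a hypothetical default at $t$, a conditional expectation of a nonnegative random variable), so the jump term is concave. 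A strictly concave function on an interval has a unique maximizer, hence two optimal strategies coincide $dt\otimes d\P$-a.e.

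The part I expect to be most delicate is the two appeals to the comparison theorem for BSDEs with jumps: it requires a structural condition on the way the generator depends on the jump variable, which holds here because the coefficient of $u$, namely $\lambda_t((1+\pi_t\beta_t)^\gamma-1)$, is bounded and $\ge -\lambda_t$ (equivalently the per-jump coefficient $(1+\pi_t\beta_t)^\gamma-1\ge -1$), so that the change of measure underlying the proof, and the positivity of the adjoint process $\Gamma$ above, are well defined. One should also be slightly careful on the boundary set $\{\pi_t\beta_t=-1\}$, where this change of measure degenerates; this does not affect the inequality $J^\pi_t\le Y_t$ used in the second step, and it is irrelevant in the last step, where the strategy considered is optimal and hence does not send the wealth to $0$ with positive probability. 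The remaining ingredients — the Lipschitz bound on $g$, the measurable selection keeping us inside $\Ac^k$, the affine/linear BSDE manipulations, and the strict concavity in $a$ — are routine.
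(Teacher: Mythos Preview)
Your argument for the first bullet and for the existence of an optimal strategy is essentially the paper's: recognize the generator as a supremum of affine drivers with uniformly bounded coefficients (hence Lipschitz), apply the comparison theorem against each $J^\pi$ to get $Y_t\ge J^k(t)$, then use measurable selection to produce $\hat\pi\in\Ac^k$ with $Y=J^{\hat\pi}$, whence $Y=J^k=J^{\hat\pi}$. The paper does not write out the ``only if'' direction separately; once uniqueness is established it follows immediately (any optimizer must coincide $dt\otimes d\P$-a.e.\ with the selector $\hat\pi$, which attains the supremum). Your linear-BSDE argument for this direction is a legitimate alternative that makes the mechanism more explicit.

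The substantive divergence is in the uniqueness argument. The paper dispatches it in one line by invoking strict concavity of $x\mapsto x^\gamma$ at the terminal time: if $\pi^1,\pi^2\in\Ac^k$ are both optimal, the averaged wealth $\tfrac12 X^{\pi^1}+\tfrac12 X^{\pi^2}$ is itself generated by an element of $\Ac^k$ (the proportion $\tilde\pi_t=(\pi^1_tX^{\pi^1}_{t^-}+\pi^2_tX^{\pi^2}_{t^-})/(X^{\pi^1}_{t^-}+X^{\pi^2}_{t^-})$ is a random convex combination of $\pi^1_t,\pi^2_t$, hence lies in $K_t$ and is bounded by $k$), so strict concavity forces $X^{\pi^1}_T=X^{\pi^2}_T$ a.s.\ and then $\pi^1=\pi^2$ $dt\otimes d\P$-a.e. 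You instead argue strict concavity of $a\mapsto h(t,a,Y_t,Z_t,U_t)$ on $K_t$; for the jump term $\lambda_t((1+a\beta_t)^\gamma-1)(Y_t+U_t)$ to be concave in $a$ you need $\lambda_t(Y_t+U_t)\ge 0$. Your justification --- that ``$Y_{t^-}+U_t$ is the value $J^k$ takes right after a hypothetical default at $t$'' --- is only heuristic: the relation $Y_\tau=Y_{\tau^-}+U_\tau\ge 0$ holds at the random time $\tau$, but promoting it to $Y_t+U_t\ge 0$ for $dt\otimes d\P$-a.e.\ $t$ on $\{t\le\tau\}$ requires an extra step (e.g.\ identifying $U_t$ before default with the difference between the post- and pre-default value functions via the standard decomposition of $\F$-predictable processes in a single-default model, or running a separate comparison for the post-default problem). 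The claim is true, but as written this step is a gap. You can either close it, or switch to the paper's shorter route via strict concavity of $x\mapsto x^\gamma$, which avoids the issue altogether.
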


\begin{proof}
Since for any $\pi \in \Ac^k$ there exist $Z^\pi\in L^2(W)$ and $U^\pi\in L^2(M)$ such that $(J^\pi, Z^\pi, U^\pi)$ is the solution of the BSDE
\begin{equation*}
-\,dJ^\pi_t = f^\pi(t,J_t^\pi,Z_t^\pi,U_t^\pi)dt - Z_t^\pi dW_t - U_t^\pi dM_t  ~;~J_T^\pi=1 \,,
\end{equation*}
with $f^\pi(s,y,z,u)= \frac{\gamma(\gamma-1)}{2}\pi_s^2\sigma_s^2y+\gamma\pi_s(\mu_sy+\sigma_sz)+\lambda_s\big((1+\pi_s\beta_s)^\gamma-1\big)(y+u)$. Let us introduce the generator $f$ which satisfies $ds\otimes d\P-a.e.$
\begin{equation*}
f(s,y,z,u)=\esssup\limits_{\pi\in \Ac^k}f^\pi(s,y,z,u) \,.
\end{equation*}
Note that $f$ is Lipschitz, since the supremum of affine functions, whose coefficients are bounded by a positive constant $c$, is Lipschitz with Lipschitz constant $c$. Hence, the BSDE with Lipschitz generator $f$
\begin{equation*}
-\,dY_t=f(y,Y_t,Z_t,U_t)dt - Z_tdW_t - U_tdM_t~;~Y_T=1,
\end{equation*}
admits a unique solution denoted by $(Y, Z, U)$.\\
By the comparison theorem in case of jumps $Y_t \geq J_t^\pi$, $\forall\, t\in[0,T]$ a.s. As this inequality is satisfied for any $\pi \in \Ac^k$, it is obvious that $Y_t \geq \esssup_{\pi\in\Ac^k} J_t^{\pi}$ a.s. Also, by applying a predictable selection theorem, one can easily show that there exists $\hat{\pi}\in \Ac^k$ such that for any $t\in[0,T]$, we have
\begin{multline*}
 \esssup\limits_{\pi\in\mathcal{A}^k}\Big\{\gamma \pi_t(\mu_tY_t+\sigma_tZ_t)+\frac{\gamma(\gamma-1)}{2}\pi_t^2\sigma_t^2Y_t+\lambda_t((1+\pi_t\beta_t)^\gamma-1)(Y_t+U_t)\Big\}\\
 = \gamma \hat{\pi}_t(\mu_tY_t+\sigma_tZ_t)+\frac{\gamma(\gamma-1)}{2}\hat{\pi}_t^2\sigma_t^2Y_t+\lambda_t((1+\hat{\pi}_t\beta_t)^\gamma-1)(Y_t+U_t).
 \end{multline*}
Thus $(Y, Z, U)$ is a solution of the BSDE (\ref{edsr jkpi}) associated with $\hat{\pi}$. Therefore by uniqueness of the solution of the BSDE (\ref{edsr jkpi}), we have $Y_t=J_t^{\hat{\pi}}$ and thus $Y_t=\esssup _{\pi \in \mathcal{A}_t^k} J^{\pi}_t=J_t^{\hat{\pi}}$, $\forall \, t\in[0,T]$ a.s.\\
The uniqueness of the optimal strategy is due to the strict concavity of the function $x \mapsto x^\gamma$.
\end{proof}

\subsection{General case}
In this part, we characterize the value function $J(t)$ by a BSDE, but the general case is more complicated than the case with bounded strategies and it needs more technical tools.  Note that the random variable $J(t)$ is defined uniquely only up to $\P$-almost sure equivalent and that the process $J(.)$ is adapted but not necessarily progressive. Using dynamic control technics, we derive the following characterization of the value function:

\begin{Proposition}\label{J(t) plus petite surmartingale}
$J(.)$ is the smallest $\F$-adapted process such that $(X^{\pi})^\gamma J(.)$ is a supermartingale for any $\pi \in \mathcal{A}$ with $J(T)=1$. More precisely, if $\bar{J}$ is an $\F$-adapted process such that $(X^{\pi})^\gamma  \bar{J}$ is a supermartingale for any $\pi \in \mathcal{A}$ with $ \bar{J}_T=1$, then for any $t\in[0,T]$, we have $J(t)\leq  \bar{J}_t$ a.s.
\end{Proposition}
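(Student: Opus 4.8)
The plan is to prove this in two halves: first that $((X_t^\pi)^\gamma J(t))$ is itself a supermartingale for every $\pi\in\Ac$ with $J(T)=1$, and then that it is the smallest such process.

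For the first half, I would use the dynamic programming principle for the value function $J(t)$. The key algebraic fact is the flow property of the wealth process: for $s\le t$, $X_T^{s,\pi} = X_t^{s,\pi}\cdot \frac{X_T^{s,\pi}}{X_t^{s,\pi}}$, and since the SDE for $X$ is linear (see \eqref{dolean n}), $\frac{X_T^{s,\pi}}{X_t^{s,\pi}}$ depends only on $\pi$ restricted to $[t,T]$ and has the same law structure as $X_T^{t,\pi'}$ for the shifted strategy. Concretely, writing $X_T^{s,1,\pi} = X_t^{s,1,\pi}\,X_T^{t,1,\pi}$ (with the normalization $x=1$ at each initial time), we get
\begin{equation*}
\E\big[(X_T^{s,\pi})^\gamma\big|\Fc_t\big] = (X_t^{s,\pi})^\gamma\,\E\big[(X_T^{t,\pi})^\gamma\big|\Fc_t\big]\le (X_t^{s,\pi})^\gamma\, J(t).
\end{equation*}
Taking the essential supremum over $\pi\in\Ac_s$ on the left and using a standard splitting/concatenation argument (any strategy on $[s,T]$ restricted to $[t,T]$ lies in $\Ac_t$, and conversely strategies can be pasted together, so the family on the right is stable under pasting), one obtains $\E[J(s)\,\big|\,\Fc_t]\cdot$(the part on $[s,t]$) — more precisely, fixing $\pi\in\Ac$ and taking $\Fc_s$-conditional expectations, $\E[(X_t^\pi)^\gamma J(t)\,|\,\Fc_s]\le (X_s^\pi)^\gamma J(s)$, which is exactly the supermartingale property. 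The condition $J(T)=1$ is immediate from $X_T^{T,\pi}=1$.

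For the minimality (second half), let $(\bar J_t)$ be any $\F$-adapted process with $\bar J_T=1$ such that $((X_t^\pi)^\gamma \bar J_t)$ is a supermartingale for all $\pi\in\Ac$. Fix $t$ and an arbitrary $\pi\in\Ac_t$; extend it arbitrarily to a strategy in $\Ac$ on $[0,T]$ (this is harmless since only the restriction to $[t,T]$ matters for what follows, by the flow property). The supermartingale property between times $t$ and $T$ gives
\begin{equation*}
(X_t^{t,\pi})^\gamma\,\bar J_t = \bar J_t\ge \E\big[(X_T^{t,\pi})^\gamma \bar J_T\,\big|\,\Fc_t\big] = \E\big[(X_T^{t,\pi})^\gamma\,\big|\,\Fc_t\big],
\end{equation*}
using $X_t^{t,\pi}=1$ and $\bar J_T=1$. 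Since this holds for every $\pi\in\Ac_t$, taking the essential supremum over $\pi$ yields $\bar J_t\ge J(t)$, $\P$-a.s., which is the claim.

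The main obstacle is making the dynamic programming / pasting argument rigorous: one must check that the family $\{(X_T^{t,\pi})^\gamma : \pi\in\Ac_t\}$ (conditioned on $\Fc_t$) is closed under pasting of strategies — i.e., stable under taking maxima along a measurable partition of $\Omega$ — so that the essential supremum is attained as an increasing limit and conditional expectations pass through it. This requires verifying that concatenating an $\Ac_s$-strategy on $[s,t]$ with an $\Ac_t$-strategy on $[t,T]$ (chosen $\Fc_t$-measurably) again lies in $\Ac_s$, which follows from the integrability conditions in the definition of $\Ac$ being local in time and the positivity/admissibility constraint $\pi_{\tau_j}'\beta^{.,j}_{\tau_j}\ge -1$ being pointwise. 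The subtlety that $(J(t))$ is only defined up to $\P$-null sets for each $t$ and need not be progressive is noted in the statement and does not affect the supermartingale inequalities, which are between fixed times; one could if desired pass to the càdlàg modification afterwards, but it is not needed for this proposition.
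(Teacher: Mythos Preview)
Your proposal is correct and follows essentially the same route as the paper: the multiplicative flow property of $(X_t^\pi)$ plus a lattice/pasting argument for the esssup gives the supermartingale property, and the minimality half is exactly the paper's argument. The paper makes the pasting step explicit by isolating it as a lemma (the family $\{J^\pi_t:\pi\in\Ac_t\}$ is stable under supremum, hence $J(t)=\lim_n\uparrow J^{\pi^n}_t$ along some sequence), then applies monotone convergence to pass the conditional expectation through the esssup; this is precisely what you flag as the ``main obstacle'' and handle in the same way.
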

From \cite{krasch99}, there exists an optimal strategy $\hat{\pi}\in \Ac$ such that $J(0)=\E[(X_T^{\hat{\pi}})^\gamma]$. From the dynamic programming principle, we have the following optimality criterion:
\begin{Proposition}\label{Critere optimalite}
The following assertions are equivalent:
\begin{enumerate}[i)]
\item $\hat{\pi}$ is an optimal strategy, that is $\E[(X^{\hat{\pi}}_T)^\gamma]=\sup_{\pi\in \Ac}\E[(X^\pi_T)^\gamma]$.
\item The process $(X^{\hat{\pi}})^\gamma J(.)$ is a martingale.
\end{enumerate}  
\end{Proposition}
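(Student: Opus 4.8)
The plan is to prove the two implications separately, using the supermartingale/martingale characterization of $(J(t))$ from Proposition \ref{J(t) plus petite surmartingale}. The key structural fact is that for any admissible $\pi\in\Ac$, the process $((X_t^\pi)^\gamma J(t))$ is a supermartingale with terminal value $(X_T^\pi)^\gamma$ (since $J(T)=1$), and that $(X_0^\pi)^\gamma=1$ because we have normalized $x=1$; hence $\E[(X_T^\pi)^\gamma]\le \E[(X_0^\pi)^\gamma J(0)]=J(0)$ for every $\pi$. Taking the supremum gives $\sup_{\pi\in\Ac}\E[(X_T^\pi)^\gamma]\le J(0)$, and in fact equality holds: $J(0)=\esssup_{\pi\in\Ac_0}\E[(X_T^{0,1,\pi})^\gamma\mid\Fc_0]=\sup_{\pi\in\Ac}\E[(X_T^\pi)^\gamma]$ since $\Fc_0$ is trivial. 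This identity $J(0)=\sup_{\pi\in\Ac}\E[(X_T^\pi)^\gamma]$ is what drives both directions.

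For $i)\Rightarrow ii)$: assume $\hat\pi$ is optimal, so $\E[(X_T^{\hat\pi})^\gamma]=J(0)$. The process $Z_t:=(X_t^{\hat\pi})^\gamma J(t)$ is a supermartingale by Proposition \ref{J(t) plus petite surmartingale}, with $Z_0=J(0)$ and $\E[Z_T]=\E[(X_T^{\hat\pi})^\gamma]=J(0)=Z_0$. A supermartingale whose terminal expectation equals its initial value is a martingale (its expectation is nonincreasing and constant on the endpoints, hence constant on $[0,T]$; combined with the supermartingale inequality $\E[Z_T\mid\Fc_t]\le Z_t$ and $\E[Z_t]=\E[Z_T]$ this forces $\E[Z_T\mid\Fc_t]=Z_t$ a.s.). Hence $((X_t^{\hat\pi})^\gamma J(t))$ is a martingale.

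For $ii)\Rightarrow i)$: assume $((X_t^{\hat\pi})^\gamma J(t))$ is a martingale. Then $\E[(X_T^{\hat\pi})^\gamma]=\E[(X_T^{\hat\pi})^\gamma J(T)]=(X_0^{\hat\pi})^\gamma J(0)=J(0)=\sup_{\pi\in\Ac}\E[(X_T^\pi)^\gamma]$, so $\hat\pi$ attains the supremum and is optimal. The main subtlety to be careful about is justifying the identity $J(0)=\sup_{\pi\in\Ac}\E[(X_T^\pi)^\gamma]$ — i.e. that the essential supremum defining $J(0)$ over $\Ac_0$ coincides with the ordinary supremum over $\Ac$ — which follows from the triviality of $\Fc_0$ together with the fact that $\Ac_0=\Ac$ (strategies on $[0,T]$ with initial time $0$), and the existence of an optimal $\hat\pi$ cited from \cite{krasch99} which guarantees the supremum is attained and finite. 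Everything else is the standard equivalence "supermartingale with matching endpoint expectations $\Leftrightarrow$ martingale," applied to the nonnegative process $((X_t^\pi)^\gamma J(t))$.
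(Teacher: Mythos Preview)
Your proposal is correct and follows essentially the same approach as the paper: both directions rest on Proposition~\ref{J(t) plus petite surmartingale} together with the identity $J(0)=\sup_{\pi\in\Ac}\E[(X_T^\pi)^\gamma]$, using the standard fact that a supermartingale whose terminal expectation equals its initial value is a martingale. You spell out a bit more carefully why $J(0)$ coincides with the ordinary supremum and why matching endpoint expectations force the martingale property, but the argument is the same as the paper's.
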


\noindent The proof of these propositions is given in Appendix \ref{Preuve des propositions}.\\

By Proposition \ref{J(t) plus petite surmartingale}, $J(.)$ is a supermartingale. Hence for any $t \in [0,T]$ we have $\E[J(t)]\leq J(0)<\infty$.

\begin{Proposition}
There exists a c\`{a}d-l\`{a}g modification of $J(.)$ which is denoted by $J$. 
\end{Proposition}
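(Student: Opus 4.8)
The plan is to invoke the standard regularization theorem for supermartingales: a supermartingale $(Y_t)$ defined on a filtration satisfying the usual conditions admits a càdlàg modification if and only if the function $t\mapsto \E[Y_t]$ is right-continuous. Since Proposition~\ref{J(t) plus petite surmartingale} already gives that $(J(t))$ is a supermartingale (take $\pi\equiv 0$, so that $X^\pi\equiv 1$ and $(J(t))$ itself is a supermartingale), and the filtration $\F$ has been assumed to satisfy the usual conditions, it suffices to show that $t\mapsto \E[J(t)]$ is right-continuous on $[0,T]$. Once this is established, the classical result (see e.g. Dellacherie--Meyer, or Karatzas--Shreve, Theorem 1.3.13) produces a càdlàg process $(J_t)$ with $J_t = J(t)$ $\P$-a.s. for every $t$, which is exactly the claim.

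The core of the argument is therefore the right-continuity of $t\mapsto \E[J(t)]$. First I would use the dynamic programming principle (which underlies Propositions~\ref{J(t) plus petite surmartingale} and~\ref{Critere optimalite}) to write, for $s<t$,
\begin{equation*}
\E[J(s)] = \sup_{\pi\in\Ac} \E\big[(X^{s,\pi}_t)^\gamma J(t)\big],
\end{equation*}
or more precisely the flow property $J(s)=\esssup_{\pi} \E[(X^{s,\pi}_t)^\gamma J(t)\mid\Fc_s]$ from which one gets $\E[J(s)] = \sup_\pi \E[(X^{s,\pi}_t)^\gamma J(t)]$. Then I would estimate $(X^{s,\pi}_t)^\gamma$ as $t\downarrow s$: using the explicit form \eqref{dolean n} of the wealth process and the uniform boundedness of $\mu,\sigma,\beta,\lambda$, together with the fact that on a short interval $(s,t]$ at most finitely many defaults occur with vanishing probability as $t\downarrow s$, one shows $\E[(X^{s,\pi}_t)^\gamma J(t)] \to \E[J(t)]$ uniformly in $\pi$ in an appropriate sense, or at least that $\limsup_{t\downarrow s}\E[J(s)] \le \E[J(s)]$ on one side and the reverse on the other, using the supermartingale inequality $\E[J(t)]\le \E[J(s)]$ for the easy direction. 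Combining the monotone (decreasing) behaviour of $t\mapsto\E[J(t)]$ coming from the supermartingale property with a lower-semicontinuity bound obtained by choosing a near-optimal strategy at time $s$ and restricting it to $(s,t]$ then yields right-continuity.

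The main obstacle I anticipate is the uniformity over $\pi\in\Ac$ needed in the limit $\E[(X^{s,\pi}_t)^\gamma J(t)]\to\E[J(t)]$: the strategies are not bounded, so controlling $(X^{s,\pi}_t)^\gamma$ near $t=s$ requires care, in particular handling the jump factor $\prod_j(1+\pi'_{\tau_j}\beta^{\cdot,j}_{\tau_j}N^j_t)$ which could be large even over a short interval. A cleaner route that sidesteps this is to first prove the result for the bounded problem $(J^k(t))$, where by Proposition~\ref{unicite} $J^k(t)=Y_t$ is already a càdlàg process (the $\Sc^2$-solution of a Lipschitz BSDE), and then pass to the limit, using that $J^k(t)\uparrow J(t)$ as $k\to\infty$ (monotone convergence of the value functions over increasing strategy sets) together with a Dini-type or almost-sure monotone-limit argument to transfer càdlàg regularity — although an increasing limit of càdlàg processes need not be càdlàg, so one still ultimately falls back on verifying right-continuity of $t\mapsto\E[J(t)]$, now with the a~priori bound $\E[J(t)]=\lim_k\E[J^k(t)]$ and each $\E[J^k(\cdot)]$ continuous. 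Either way, the supermartingale regularization theorem is the engine; the work is entirely in the right-continuity in expectation.
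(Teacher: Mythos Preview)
Your approach via the supermartingale regularization theorem is in principle viable, but it is far more laborious than necessary, and you yourself flag the genuine obstacles: the lack of uniformity over unbounded $\pi$ when estimating $(X^{s,\pi}_t)^\gamma$ as $t\downarrow s$, and the fact that an increasing limit of c\`adl\`ag processes need not be c\`adl\`ag. Neither of the two routes you sketch is actually completed in your proposal, so as written it is not a proof but a program with acknowledged gaps.

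The paper bypasses all of this with a one-line argument that you have overlooked. Immediately before Proposition~\ref{Critere optimalite} it is recalled (from Kramkov--Schachermayer) that an optimal strategy $\hat\pi\in\Ac$ exists. By the optimality criterion of Proposition~\ref{Critere optimalite}, the process $\big((X_t^{\hat\pi})^\gamma J(t)\big)$ is then a \emph{martingale}, not merely a supermartingale. Since $(X_T^{\hat\pi})^\gamma J(T)=(X_T^{\hat\pi})^\gamma$, this gives the explicit representation
\[
J(t)=\frac{\E\big[(X_T^{\hat\pi})^\gamma\,\big|\,\Fc_t\big]}{(X_t^{\hat\pi})^\gamma},\qquad \P\text{-a.s.}
\]
The numerator is a martingale and hence admits a c\`adl\`ag modification (usual conditions), and the denominator $(X_t^{\hat\pi})^\gamma$ is already c\`adl\`ag and strictly positive. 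The quotient is therefore c\`adl\`ag, which is the desired modification $(J_t)$.

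The comparison is instructive: your route tries to extract regularity from the supermartingale structure alone, which forces delicate uniform estimates over the whole admissible class; the paper's route exploits the \emph{existence of an optimizer} to reduce the question to a single explicit ratio of nice processes. The moral is that once an optimal strategy is available, the dynamic programming martingale it generates carries far more information than the generic supermartingale inequality.
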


\begin{proof}
 By Proposition \ref{Critere optimalite}, we know that $J(t)= \E[(X_T^{\hat{\pi}})^\gamma|\Fc_t]/(X_t^{\hat{\pi}})^\gamma$ a.s. Which implies the desired result.
\end{proof}

This c\`{a}d-l\`{a}g process is characterized by a BSDE. More precisely, we have:
\begin{Theorem}\label{Proposition BSDE J}
There exist $Z\in L^2_{loc}(W)$ and $U\in L^1_{loc}(M)$ such that $(J, Z, U)$ is the minimal solution\footnote{That is for any solution $(\bar{J}, \bar{Z}, \bar{U})$ of the BSDE (\ref{BSDE puissance}) in $L^{1,+}\times L^2_{loc}(W) \times L^1_{loc}(M)$, we have $J_t \leq \bar{J}_t,~\forall\, t\in [0,T]$ a.s.} in $L^{1,+} \times L^2_{loc}(W) \times L^1_{loc}(M)$ of the following BSDE 
\begin{equation}
\left\{\begin{aligned}
-\,dJ_t~=&  \esssup\limits_{\pi\in\mathcal{A}}\Big\{\gamma \pi_t(\mu_tJ_t+\sigma_tZ_t)+\frac{\gamma(\gamma-1)}{2}\pi_t^2\sigma_t^2J_t +\lambda_t((1+\pi_t\beta_t)^\gamma-1)(J_t+U_t)\Big\}dt \\ 
&-Z_t dW_t - U_tdM_t,\\
J_T~=&\;1.
\end{aligned}\right.
\label{BSDE puissance}
\end{equation}
 There exists a unique optimal strategy such that $J(0)=\E[(X_T^{\hat \pi})^\gamma]$. Moreover, $\hat{\pi}\in \Ac$ is optimal if and only if it attains the essential supremum of the generator in (\ref{BSDE puissance}) $dt\otimes d\P-a.e.$
\end{Theorem}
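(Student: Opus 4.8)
The strategy is to obtain the BSDE from the characterization already available: by Proposition~\ref{J(t) plus petite surmartingale}, $(J_t)$ is the smallest $\F$-adapted c\`ad-l\`ag process with $J_T=1$ making $((X^\pi_t)^\gamma J_t)$ a supermartingale for every $\pi\in\Ac$, and by Proposition~\ref{Critere optimalite} together with the existence of an optimal $\hat\pi$ from \cite{krasch99}, the process $((X^{\hat\pi}_t)^\gamma J_t)$ is a genuine martingale. First I would use the representation $J_t = \E[(X^{\hat\pi}_T)^\gamma\mid\Fc_t]/(X^{\hat\pi}_t)^\gamma$. The numerator $N_t:=\E[(X^{\hat\pi}_T)^\gamma\mid\Fc_t]$ is a square-integrable (positive) martingale, so by Lemma~\ref{theoreme representation} it has the representation $dN_t = \tilde a_t dW_t + \tilde b_t dM_t$. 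Dividing by $(X^{\hat\pi}_t)^\gamma$ and applying It\^o's formula for jump processes to the product $J_t = N_t \cdot ((X^{\hat\pi}_t)^\gamma)^{-1}$ — using the explicit Dol\'eans--Dade form \eqref{dolean n} of $X^{\hat\pi}$ — yields the dynamics $-dJ_t = -Z_t dW_t - U_t dM_t + g_t\,dt$ for some $Z\in L^2_{loc}(W)$, $U\in L^1_{loc}(M)$ and some drift process $g_t$, and shows $J\in L^{1,+}$.

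The next step is to identify the drift $g_t$ with the essential supremum appearing in \eqref{BSDE puissance}. The supermartingale property of $((X^\pi_t)^\gamma J_t)$ for arbitrary $\pi\in\Ac$, expanded via It\^o (again using \eqref{dolean n}), forces the finite-variation part of this product to be non-increasing; collecting the $dt$ terms gives that, $dt\otimes d\P$-a.e.,
\[
g_t \;\ge\; \gamma\pi_t(\mu_t J_t + \sigma_t Z_t) + \tfrac{\gamma(\gamma-1)}{2}\pi_t^2\sigma_t^2 J_t + \lambda_t\big((1+\pi_t\beta_t)^\gamma - 1\big)(J_t + U_t)
\]
for every admissible $\pi$, hence $g_t$ dominates the essential supremum over $\pi\in\Ac$ (a measurable-selection / predictable-selection argument, as in Proposition~\ref{unicite}, makes the pointwise supremum over the control variable into an essential supremum over strategies). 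Conversely, the martingale property along $\hat\pi$ makes the corresponding finite-variation part vanish, so $g_t$ equals the value of the generator at $\pi=\hat\pi$, which is at most the essential supremum; combining the two inequalities gives $g_t = \esssup_{\pi\in\Ac}\{\cdots\}$, and simultaneously shows $\hat\pi$ attains the supremum $dt\otimes d\P$-a.e. This handles the last sentence of the theorem.

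Finally I would prove minimality. Let $(\bar J_t,\bar Z_t,\bar U_t)\in L^{1,+}\times L^2_{loc}(W)\times L^1_{loc}(M)$ be any solution of \eqref{BSDE puissance}. Applying It\^o to $((X^\pi_t)^\gamma \bar J_t)$ for a fixed $\pi\in\Ac$ and using that the generator dominates the $\pi$-th affine expression, the $dt$-part of this product is non-positive, so $((X^\pi_t)^\gamma \bar J_t)$ is a local supermartingale; since it is non-negative it is a genuine supermartingale (a local supermartingale bounded below is a supermartingale). As $\bar J_T = 1$, Proposition~\ref{J(t) plus petite surmartingale} applies and gives $J_t\le\bar J_t$ for all $t$, $\P$-a.s. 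The main obstacle is the local-integrability bookkeeping: unlike the bounded case of Proposition~\ref{unicite}, here $Z$ and $U$ live only in the local spaces, the generator is no longer globally Lipschitz, and one must be careful that the It\^o expansions and the passage from "local (super)martingale" to "(super)martingale" are justified — this is exactly why positivity (the $L^{1,+}$ constraint and $(X^\pi)^\gamma\ge 0$) is used, and why one argues at the level of the (super)martingale property rather than by a contraction/comparison fixed-point argument. The measurable-selection step needed to turn the pointwise sup into an essential sup attained by a predictable $\hat\pi$, and to show this $\hat\pi$ is actually admissible, is the other technical point requiring care.
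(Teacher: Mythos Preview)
Your proposal is correct and follows essentially the same architecture as the paper's proof: derive the dynamics of $J$, use the supermartingale property of $((X^\pi)^\gamma J)$ for every $\pi\in\Ac$ to bound the drift from below by the essential supremum, use the martingale property at the optimal $\hat\pi$ (existence from \cite{krasch99}, Proposition~\ref{Critere optimalite}) to get equality and to see that $\hat\pi$ attains the supremum, and establish minimality by showing that any solution $(\bar J,\bar Z,\bar U)$ makes $((X^\pi)^\gamma\bar J)$ a genuine supermartingale so that Proposition~\ref{J(t) plus petite surmartingale} yields $J\le\bar J$. The only difference is in the first step: the paper obtains the dynamics of $J$ directly from the Doob--Meyer decomposition of the c\`ad-l\`ag supermartingale $(J_t)$ combined with Lemma~\ref{theoreme representation}, whereas you write $J_t=N_t\cdot((X^{\hat\pi}_t)^\gamma)^{-1}$ and apply It\^o to the quotient; this works but tacitly requires strict positivity of $X^{\hat\pi}$ (admissibility only gives $\ge 0$), a point the Doob--Meyer route avoids, and your claim that $N_t$ is \emph{square}-integrable is not justified (only $L^1$ is known), though you correctly land in the local spaces anyway.
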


\noindent The proof of this theorem is postponed in Appendix \ref{preuve du theoreme bsde}.\\

There exists another characterization of the value function $J$ as the limit of processes $(J^k)_{k\in\N}$ as $k$ tends to $+\infty$, when $J^k$ is the value function in the case where the strategies are bounded by $k$:

\begin{Theorem}\label{J limite}
For any $t\in[0,T]$, we have
\begin{equation*}
J_t=\lim\limits_{k\rightarrow \infty} \uparrow J^k(t) ~ a.s.
\end{equation*}
\end{Theorem}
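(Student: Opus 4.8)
The plan is to show the two inequalities $J_t \ge \lim_k \uparrow J^k(t)$ and $J_t \le \lim_k \uparrow J^k(t)$ separately. First observe that the limit $\lim_k \uparrow J^k(t)$ exists (as an increasing limit, possibly $+\infty$ a priori): indeed $\Ac^k_t \subset \Ac^{k+1}_t$, so the sequence $(J^k(t))_k$ is nondecreasing in $k$, $\P$-a.s. Since moreover $\Ac^k_t \subset \Ac_t$ for every $k$, we have $J^k(t) \le J(t)$, $\P$-a.s., for every $k$ and every $t$; hence $\lim_k \uparrow J^k(t) \le J_t$, $\P$-a.s. This already gives in particular that the limit is finite $\P$-a.s. (by the integrability hypothesis $\sup_{\pi \in \Ac(1)} \E[(X_T^{1,\pi})^\gamma] < \infty$).

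For the reverse inequality, the natural approach is an approximation/density argument at the level of strategies: given $\pi \in \Ac_t$, I want to produce a sequence $\pi^k \in \Ac^k_t$ with $\E[(X_T^{t,\pi^k})^\gamma \mid \Fc_t] \to \E[(X_T^{t,\pi})^\gamma \mid \Fc_t]$. The first reduction is to fix $\pi \in \Ac_t$ and set $\pi^k_s := \pi_s \1_{|\pi_s| \le k}$ (the truncation to the ball of radius $k$); then $\pi^k \in \Ac^k_t$, and $\pi^k \to \pi$ pointwise $dt\otimes d\P$-a.e. on $[t,T]\times\Omega$, and crucially $\pi^k_{\tau_j}\beta_{\tau_j} \ge -1$ still holds since truncating toward $0$ can only move the jump factor $1 + \pi_s\beta_s$ toward $1$. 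From the explicit formula (\ref{dolean n}) for $X^{t,\pi}_T$, we get $(X^{t,\pi^k}_T)^\gamma \to (X^{t,\pi}_T)^\gamma$ $\P$-a.s. The remaining work is to pass this convergence under the conditional expectation, which requires a uniform integrability / domination bound. For this I would split $X^{t,\pi}_T$ into its continuous Doléans part $\Ec_T := \exp\big(\int_t^T(\gamma\pi_s\mu_s - \tfrac{\gamma}{2}|\pi_s\sigma_s|^2)ds + \int_t^T \gamma\pi_s\sigma_s dW_s\big)$ times the jump factor $\prod_j (1 + \pi_{\tau_j}\beta_{\tau_j}\1_{t<\tau_j\le T})^\gamma$; the jump factor is bounded by $(1 + k|\beta|_\infty)^\gamma$ only on $\Ac^k$, so instead I would use that $0 \le (1+\pi_{\tau_j}\beta_{\tau_j})^\gamma \le 1 + \gamma\pi_{\tau_j}\beta_{\tau_j}^+$ (concavity of $x\mapsto x^\gamma$ and $x^\gamma \le 1 + \gamma(x-1)^+$ for $x\ge 0$, actually $x^\gamma\le 1+x$ suffices), reduce to controlling $\E[\Ec_T\cdot(\text{affine in the }\pi_{\tau_j}\beta_{\tau_j})\mid\Fc_t]$, and exploit that $\Ec^k_T$ (the corresponding quantity for $\pi^k$) is dominated — since $\gamma < 1$ and by monotone convergence applied to the positive parts — by invoking the finiteness $J(t) = \E[(X_T^{\hat\pi}_T)^\gamma\mid\Fc_t]/(X^{\hat\pi}_t)^\gamma < \infty$ from the preceding Proposition. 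Concretely, since $J^k(t) \le J(t)$ for all $k$ and $(X^{t,\pi^k}_T)^\gamma \to (X^{t,\pi}_T)^\gamma$ a.s. with $\E[(X^{t,\pi}_T)^\gamma\mid\Fc_t] \le J(t) < \infty$, an application of Fatou's lemma for conditional expectations gives $\E[(X^{t,\pi}_T)^\gamma\mid\Fc_t] \le \liminf_k \E[(X^{t,\pi^k}_T)^\gamma\mid\Fc_t] \le \lim_k \uparrow J^k(t)$. Taking the essential supremum over $\pi \in \Ac_t$ on the left then yields $J(t) \le \lim_k \uparrow J^k(t)$, $\P$-a.s., and combined with the first inequality and the càdlàg modification, $J_t = \lim_k \uparrow J^k(t)$.

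The main obstacle is the passage to the limit under the conditional expectation: Fatou gives only one direction, so one must check that Fatou is applied in the useful direction, i.e. that $\E[(X^{t,\pi}_T)^\gamma \mid \Fc_t]$ is bounded above by the liminf of the approximating quantities — this is exactly why the truncation $\pi^k = \pi\1_{|\pi|\le k}$ (rather than, say, a dilation) is the right choice, since it ensures both that $\pi^k \in \Ac^k_t$ and that one does not need a domination from above but only the a.s. convergence together with the already-established finiteness of the limit. A secondary technical point is checking that each truncated $\pi^k$ genuinely lies in $\Ac^k_t$, i.e. that the integrability condition $\int_t^T\|\pi^{k\prime}_s\sigma_s\|^2 ds + \int_t^T |\pi^{k\prime}_s\beta_s|\lambda_s ds < \infty$ holds — but this is immediate since $|\pi^k_s| \le |\pi_s|$ pointwise and $\pi \in \Ac_t$.
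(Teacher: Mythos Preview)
Your argument is essentially correct and considerably more direct than the paper's. The paper does not truncate strategies and apply Fatou; instead it exploits the dynamic-programming characterization of $J_t$ as the smallest process making $\big((X^\pi_t)^\gamma J_t\big)$ a supermartingale for every $\pi\in\Ac$. Concretely, it sets $\tilde J(t)=\lim_k\uparrow J^k(t)$, passes to a c\`adl\`ag version $\bar J_t$, and uses the Doob--Meyer decomposition together with the martingale representation to obtain a generator-type inequality for $\bar J$ over \emph{bounded} strategies; the key step is then the observation that the pointwise essential supremum appearing in the generator is the same whether taken over bounded strategies or over all of $\Ac$, which lets one upgrade the supermartingale property of $\big((X^\pi_t)^\gamma\bar J_t\big)$ from bounded $\pi$ to arbitrary $\pi\in\Ac$, and the minimality of $J_t$ then forces $J_t\le\bar J_t\le\tilde J(t)$. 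Your approach bypasses this machinery entirely by working directly at the level of terminal wealths, using only that $(X^{t,\pi}_T)^\gamma\ge 0$ so that conditional Fatou points in the useful direction. The paper's route is natural in its context because it reuses the supermartingale/BSDE apparatus already developed; yours is more elementary and self-contained for the bare statement.

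One small technical point: the convergence $(X^{t,\pi^k}_T)^\gamma\to(X^{t,\pi}_T)^\gamma$ is in probability rather than $\P$-a.s., since $\int_t^T|\pi^k_s-\pi_s|^2\sigma_s^2\,ds\to 0$ a.s.\ guarantees only convergence in probability of the stochastic integral $\int_t^T\pi^k_s\sigma_s\,dW_s$. This is harmless: pass to a subsequence along which the convergence is a.s., apply Fatou along that subsequence, and use that $(J^k(t))_k$ is nondecreasing so that $\liminf_j J^{k_j}(t)=\lim_k J^k(t)$. Your write-up would also be cleaner without the digression on domination bounds and the splitting into continuous Dol\'eans part versus jump factor; once you have nonnegativity and Fatou, none of that is needed.
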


\noindent The proof of this theorem is given in Appendix \ref{preuve du theoreme approximation}. 

This allows to approximate the value function $J$ by numerical computation, since the value functions $(J^k)_{k \in \N}$ are the solution of Lipschitz BSDEs and the results of \cite{boueli08} can be applied.

\section{The partial information case}
\setcounter{equation}{0} \setcounter{Assumption}{0}
\setcounter{Theorem}{0} \setcounter{Proposition}{0}
\setcounter{Corollary}{0} \setcounter{Lemma}{0}
\setcounter{Definition}{0} \setcounter{Remark}{0}

We consider a general filtration which modelizes the information given by the prices $(S_t)_{0 \leq t \leq T}$, the default time $\tau$, but also by other factors. These factors can have in particular an influence on the default probability. We consider an agent on this market, which does not observe all the information but only the information given by the prices and the default time. The underlying Brownian motion, the drift process and the compensator process in the  equation for the asset price are not directly observable. \\

Let $(\Omega,\Fc,\P)$ be a probability triplet and $\F=\{\Fc_t,0\leq t \leq T\}$ a filtration in $\Fc$ satisfying the usual conditions (augmented and right continuous). Suppose that this space is equipped with $W$ and $N$ as in Section \ref{modele}. We also assume there are a risk-free asset and a risky asset on the market. As in Section \ref{modele}, we assume that the price process $S$ evolves according to the following model  
\begin{equation}\label{partiel actif S}
dS_t = S_{t^-} (\mu_tdt + \sigma_t dW_t + \beta_tdN_t),~~ 0\leq t \leq T,
\end{equation}
moreover we assume that
\begin{equation*}
\sigma_t = 
\begin{cases}
\sigma_1(t,S_{t}) & \mbox{if } t \leq \tau \,,  \\
\sigma_2(t,S_{t}, \tau, \beta_\tau)  & \mbox{if } t > \tau \,, 
\end{cases} 
\end{equation*}
and
\begin{equation*}
\beta_t = \beta(t,S_{t^-}) ~\text{if } t \leq \tau \,,  
\end{equation*}
Note that the intensity $\lambda$ and the drift $\mu$ are not necessary observable.
The known functions $\sigma_1(t,x)$ and $\beta(t,x)$ are measurable mappings from $[0,T]\times \R$ into $\R$, and the function $\sigma_2(t,x,s,b)$ is a measurable mapping from $[0,T]\times \R \times \R_+ \times ]-1,\infty [$ into $\R$. We make the hypotheses of Assumption \ref{hypothese coeff} and we add the following assumption:
\begin{Assumption}\emph{\label{partiel hypothese coeff}
The functions $x\sigma_1(t,x)$, $x\sigma_2(t,x,s,b)$ and $x\beta_1(t,x)$ are Lipschitz in $x\in  \R$ , uniformly in $t\in [0,T]$, $s\in \R_+$ and $b\in ]-1,\infty[$.
}\end{Assumption}

We now consider an agent in this market who can observe neither the Brownian motion $W$ nor the drift $\mu$ and the process $\lambda$, but only the asset price process $S$ and the default time $\tau$. We shall denote by $\G=\{\Gc_t,0\leq t \leq T\}$ the $\P$-filtration augmented by the price process $S$ and the default process $N$. The trading strategies are defined as in Section \ref{modele}, but we add the condition that they are $\G$-predictable. We now want to solve the problem of maximization of expected utility from terminal wealth. It is not possible to use directly the  results of the full information case because we do not know the Brownian motion, the drift and the compensator. As in \cite{phaque01}, we begin by an operation of filtering.

\subsection{Filtering} \label{filtrage}  

Recall that we have assumed that $\theta_t=\mu_t / \sigma_t$ is uniformly bounded, therefore the following integrability condition holds
\begin{equation*}
\int_0^T|\theta_t|^2dt<\infty ~ a.s.
\end{equation*}
\noindent Consider the positive martingale defined by $L_0=1$ and $dL_t=-L_t\,\theta_tdW_t$. It is explicitly given by
\begin{equation}
L_t=\exp\Big(-\int_0^t\theta_s dW_s-\frac{1}{2}\int_0^t |\theta_s|^2 ds\Big) \,.
\label{L}
\end{equation}
One can define a probability measure equivalent to $\P$ on $(\Omega,\Fc)$ characterized by
\begin{equation}
\frac{d\,\Q}{d\,\P}\Big|_{\Fc_t}=L_t,~0\leq t \leq T \,.
\label{Q}
\end{equation}
By Girsanov's theorem, the process defined by
\begin{equation}
\tilde{W}_t=W_t+\int_0^t \theta_sds
\label{chang brownien tilde}
\end{equation}
is a $(\Q,\mathbb{F})$-Brownian motion and the compensated martingale $M$ is still a $(\Q,\mathbb{F})$-martingale. The dynamic of $S$ under $\Q$ is given by
\begin{equation}
dS_t= S_{t^-} (\sigma_td\tilde{W}_t + \beta_tdN_t) \,.
\label{eq RN}
\end{equation}
We begin by proving a lemma which will be of paramount  importance in the sequel:
\begin{Proposition}
Under Assumption \ref{hypothese coeff}, the filtration $\mathbb{G}$ is the augmented filtration of $(\tilde{W},N)$.
\label{engendre}
\end{Proposition}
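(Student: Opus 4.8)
The plan is to prove the two inclusions between the \emph{raw} filtrations generated by $(S,N)$ and by $(\tilde{W},N)$ and then pass to their $\P$-augmentations; since $\Q\sim\P$ by (\ref{Q}), the two probabilities share the same negligible sets, so throughout I would argue under $\Q$, where $\tilde{W}$ is a Brownian motion (Girsanov) and $(S_t)$ obeys the driftless equation (\ref{eq RN}). Call $\Ft^{\,\tilde{W},N}$ the filtration generated by $(\tilde{W},N)$.

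\textbf{Step 1: $\Ft^{\,\tilde{W},N}\subseteq\G$.} Here I would simply invert (\ref{eq RN}). By Assumption \ref{hypothese coeff}, $\sigma(t,S_{t^-},t\wedge\tau)$ is nonsingular and $(S_t)$ is positive, so $\text{diag}(S_{t^-})$ is invertible, and (\ref{eq RN}) gives
\[
\tilde{W}_t=\int_0^t\sigma(s,S_{s^-},s\wedge\tau)^{-1}\Big(\text{diag}(S_{s^-})^{-1}dS_s-\beta(s,S_{s^-},s\wedge\tau)\,dN_s\Big),\qquad 0\le t\le T.
\]
The integrands are left-continuous functionals of $S$ and of the default times (which are $\G$-stopping times, so $s\wedge\tau$ is a $\G$-predictable process), hence $\G$-predictable; since $S$ is a $\G$-semimartingale (Stricker's theorem, as $S$ is $\G$-adapted and an $\F$-semimartingale), the right-hand side defines a $\G$-adapted process, so $\tilde{W}$ is $\G$-adapted. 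As $N$ is $\G$-adapted by definition of $\G$, we get $\sigma(\tilde{W}_s,N_s:\,s\le t)\subseteq\Gc_t$ for every $t$, and completing gives the inclusion.

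\textbf{Step 2: $\G\subseteq\Ft^{\,\tilde{W},N}$.} Here I would show that $(S_t)$ is a measurable functional of $(\tilde{W},N)$, by exhibiting it as the unique strong solution of (\ref{eq RN}). By Assumption \ref{partiel hypothese coeff}, $s\mapsto\text{diag}(s)\sigma(t,s,h)$ and $s\mapsto\text{diag}(s)\beta(t,s,h)$ are Lipschitz in $s$ uniformly in $(t,h)$, and $N$ has only finitely many jumps, at the a.s.\ distinct times $\tau_1,\dots,\tau_p$ (Assumption~2.1(i)). Ordering these as $\tau_{(1)}<\cdots<\tau_{(p)}$ — they are jump times of $N$, hence $\Ft^{\,\tilde{W},N}$-stopping times — I would construct $S$ inductively on the stochastic intervals $[\tau_{(k)},\tau_{(k+1)})$: on each such interval $S$ solves a classical Lipschitz SDE driven only by $\tilde{W}$ (the components $s\wedge\tau_j$ being either already-observed constants or the running time $s$), whose strong solution is a measurable functional of $\tilde{W}$ and of the already-built value $S_{\tau_{(k)}}$; at each default $\tau_{(k)}=\tau_j$ there is the deterministic update $S_{\tau_j}=\text{diag}(S_{\tau_j^-})\big(\mathds{1}+\beta^{.,j}(\tau_j,S_{\tau_j^-},\tau_j\wedge\tau)\big)$. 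Each step preserves $\Ft^{\,\tilde{W},N}$-adaptedness, so $S$ — and $N$ trivially — generates a $\sigma$-field contained in that of $(\tilde{W},N)$; completing yields $\G\subseteq\Ft^{\,\tilde{W},N}$, and together with Step 1 the asserted equality.

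The main obstacle is Step 2: rigorously producing a strong solution of (\ref{eq RN}) depending measurably on the driving pair $(\tilde{W},N)$ in the presence of jumps. The piecewise construction reduces this to the classical It\^o theory for continuous Lipschitz SDEs between consecutive defaults, but one must verify that gluing at the random switching times $\tau_{(k)}$ preserves adaptedness — which it does, these being $\Ft^{\,\tilde{W},N}$-stopping times — and invoke pathwise uniqueness to identify the so-constructed process with the original $S$. A minor bookkeeping point is that the inclusions above are between raw generated filtrations and transfer to the $\P$-augmentations only because $\P\sim\Q$, while right-continuity (implicit in calling $\Ft^{\,\tilde{W},N}$ \emph{the} augmented filtration of $(\tilde{W},N)$) is part of the usual conditions.
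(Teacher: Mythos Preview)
Your proof is correct and follows the same two-inclusion strategy as the paper: invert (\ref{eq RN}) to show $\tilde W$ is $\G$-adapted, and use strong existence/uniqueness for (\ref{eq RN}) to show $S$ is $\F^{\tilde W,N}$-adapted. The only notable difference is in Step~2: the paper dispatches it in one line by citing Protter's Theorem~V~3.7 on Lipschitz SDEs with jumps, whereas you give an explicit piecewise construction between ordered default times, reducing to the classical continuous Lipschitz theory on each subinterval and gluing via the deterministic jump update. Your route is more self-contained (it avoids quoting a jump-SDE existence result) and makes transparent why the finitely many, a.s.\ distinct defaults pose no difficulty; the paper's route is shorter. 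You also make explicit two points the paper glosses over: that $S$ is a $\G$-semimartingale (Stricker) so the inversion integral in Step~1 is legitimate in $\G$, and that $\P\sim\Q$ is what allows passing between augmentations. Both arguments rely on Assumption~\ref{partiel hypothese coeff} (the Lipschitz hypothesis on the coefficients), which is a standing assumption in the section though not listed in the proposition statement itself.
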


\begin{proof}
Let $\mathbb{F}^{\tilde{W},N}$ be the augmented filtration of $(\tilde{W},N)$. From (\ref{eq RN}), we have
\begin{equation*}
\tilde{W}_t=\int_0^t \sigma_s^{-1} S^{-1}_{s^-} dS_s - \int_0^t \sigma_s^{-1} \beta_s dN_s \,, 
\end{equation*}
for any $t\in [0,T]$, which implies that $\tilde{W}$ is $\mathbb{G}$-adapted and $\mathbb{F}^{\tilde{W},N} \subset \mathbb{G}$. Conversely, under the assumptions on the coefficients, by a classical result of stochastic differential equation (see \cite{pro90}, Theorem V 3.7), the unique solution of (\ref{eq RN}), on $0 \leq t < \tau$, is $\mathbb{F}^{\tilde{W}}$-adapted, and by using a Picard sequence and an iteration we prove that the unique solution of (\ref{eq RN}) is   $\mathbb{F}^{\tilde{W}, N}$-adapted on $\tau \leq t$. Hence $\G \subset \mathbb{F}^{\tilde{W},N}$ and finally $\G = \mathbb{F}^{\tilde{W},N}$.
\end{proof}

Since the processes $\theta$ and $\lambda$ are not $\G$-predictable, it is natural to introduce the $\G$-conditional law of these random variables, say
\begin{equation*}
\tilde{\lambda}_t=\E\big[\lambda_t\big|\mathcal{G}_t\big]~\text{and  }
\tilde{\theta}_t=\E\big[\theta_t\big|\mathcal{G}_t\big] \,.
\end{equation*}
Consider the couple of processes $(\bar{W}, \bar{M})$ defined by
\begin{equation}\label{MbarWbar}
\left\{\begin{aligned}
\bar{W}_t &=\tilde{W}_t-\int_0^t\tilde{\theta}_sds \,,\\
\bar{M}_t&=N_t-\int_0^{t}\tilde{\lambda}_sds \,.
\end{aligned}
\right.
\end{equation}
These are the so-called innovation processes of filtering theory. By classical results in filtering theory (see for example \cite{par89}, Proposition 2.27), we have:
\begin{Proposition}
The process $\bar{M}$ is a $(\Q,\mathbb{G})$-martingale.
\label{M barre}
\end{Proposition}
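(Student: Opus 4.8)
The plan is to show that $(\bar M_t)$ is a $(\Q,\G)$-martingale by a standard projection argument, exploiting the fact that the $(\Q,\F)$-compensator of $(N_t)$ is $\int_0^t\lambda_s\,ds$ and that $\tilde\lambda_t$ is by definition the $\G$-optional (predictable) projection of $\lambda_t$. First I would recall that $M_t=N_t-\int_0^t\lambda_s\,ds$ is a $(\Q,\F)$-martingale, a fact already recorded in the excerpt (Girsanov leaves $(M_t)$ a $(\Q,\F)$-martingale since the Radon--Nikodym density $L$ is continuous and hence does not affect the jump part). Writing
\begin{equation*}
\bar M_t = N_t - \int_0^t\tilde\lambda_s\,ds = M_t + \int_0^t(\lambda_s-\tilde\lambda_s)\,ds,
\end{equation*}
it suffices to show that $A_t:=\int_0^t(\lambda_s-\tilde\lambda_s)\,ds$ defines a $(\Q,\G)$-martingale, because $\bar M_t$ is $\G$-adapted (both $N$ and the integral of the $\G$-adapted process $\tilde\lambda$ are) and $\Q$-integrable by Assumption~\ref{rho borne}-type bounds together with the boundedness of $\lambda$.

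Next I would verify the martingale property of the $\G$-projected quantity directly. For $s\le t$ and a bounded $\Gc_s$-measurable random variable $\xi$, I would compute, using the tower property over $\Fc_u\supset\Gc_u$ and Fubini,
\begin{equation*}
\E^\Q\!\Big[\xi\!\int_s^t(\lambda_u-\tilde\lambda_u)\,du\Big]
= \int_s^t \E^\Q\!\big[\xi(\lambda_u-\E^\Q[\lambda_u\mid\Gc_u])\big]\,du
= \int_s^t \E^\Q\!\big[\xi\,\E^\Q[\lambda_u-\E^\Q[\lambda_u\mid\Gc_u]\mid\Gc_u]\big]\,du = 0,
\end{equation*}
where the inner conditional expectation vanishes because $\xi$ is $\Gc_s\subset\Gc_u$-measurable. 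This is legitimate provided $\tilde\lambda_t$ is chosen to be a genuine $\G$-adapted (in fact predictable) version of $\E^\Q[\lambda_t\mid\Gc_t]$; here one should note $\Q\sim\P$ on $\Fc_t$ with $\Gc_t$-measurable density behaviour, so $\E^\Q[\lambda_t\mid\Gc_t]$ and $\E^\P[\lambda_t\mid\Gc_t]=\tilde\lambda_t$ agree up to the density, and one works with the appropriate version; the cleanest route is to carry out the whole computation under $\Q$ from the start, so that $\tilde\lambda$ denotes the $\Q$-conditional intensity. Then $A_t$ is a $(\Q,\G)$-martingale, hence so is $\bar M_t = M_t + A_t$ once one observes that $M_t$ itself, being $(\Q,\F)$-adapted, has $(\Q,\G)$-optional projection equal to a $(\Q,\G)$-martingale — more precisely, $N_t-\int_0^t\tilde\lambda_s ds$ is exactly the $\G$-compensated version of $N$, which is the content of the classical innovations result.

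The main obstacle I anticipate is purely one of bookkeeping rather than of substance: one must be careful about which probability measure the conditional expectations defining $\tilde\lambda$ and $\tilde\rho$ are taken under, and about choosing predictable versions so that the integrals $\int_0^t\tilde\lambda_s\,ds$ are well defined and the Fubini interchange is justified — the latter needs the integrability $\E^\Q\!\int_0^T\!|\lambda_s|\,ds<\infty$, which follows from the uniform boundedness of $(\lambda_t)$ assumed in Section~\ref{modele}. Beyond that, the result is the standard filtering fact that the innovations process of a marked point process is a martingale in the observation filtration, so I would simply cite \cite{par89} for the measure-theoretic details and present the short projection computation above as the proof.
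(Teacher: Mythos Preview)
Your approach is essentially the same as the paper's: write $\bar M_t = M_t + \int_0^t(\lambda_s - \tilde\lambda_s)\,ds$, observe $\G$-adaptedness from $N$ and $\tilde\lambda$, and then apply the tower property (the paper just says ``by the law of iterated conditional expectation, it is easy to check''). Your version spells out the Fubini/projection computation and flags the $\P$-versus-$\Q$ ambiguity in the definition of $\tilde\lambda$, a point the paper's proof passes over in silence.
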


\begin{proof}
Since the process $N$ and the intensity $\tilde{\lambda}$ are $\mathbb{G}$-adapted, the process $\bar{M}$ is $\mathbb{G}$-adapted. We can write from (\ref{M})
\begin{equation*}
\bar{M}_t=M_t+\int_0^t(\lambda_s-\tilde{\lambda}_s)ds \,.
\end{equation*}
By the law of iterated conditional expectation, it is easy to check that $\bar{M}$ is a $(\Q,\mathbb{G})$-martingale.
 \end{proof}

\begin{Remark}
\emph{From Proposition \ref{engendre} and (\ref{MbarWbar}), the filtration $\mathbb{G}$ is equal to the augmented filtration of $(\tilde{W},\bar{M})$, since $[\bar{M}]_t=N_t$.
\label{WM}}
\end{Remark}

We have also the following property about the process $\Wb$:
\begin{Proposition}
The process
$\bar{W}$ is a $(\P,\mathbb{G})$-Brownian motion.
\end{Proposition}

\begin{proof}
We can write with (\ref{chang brownien tilde})
\begin{equation}
\bar{W}_t=W_t+\int_0^t\sigma_s^{-1}(\mu_s-\tilde{\mu}_s)ds \,,
\label{W bar}
\end{equation}
where $\tilde{\mu}_t=\E\big[\mu_t\big|\mathcal{G}_t\big]$. By Proposition \ref{engendre} and (\ref{MbarWbar}), $\bar{W}$ is $\G$-adapted. Moreover, we have $[\bar{W},\bar{W}]_t=t$ for any $t\in [0,T]$. By the law of iterated conditional expectation, it is easy to check that $\bar{W}$ is a $\G$-martingale. We then conclude by Levy's characterization theorem on Brownian motion (see, e.g., Theorem 3.3.16 in \cite{karshr91}).
 \end{proof}

Denote by $\Lambda$ the $(\Q,\F)$-martingale given by $\Lambda_t=1/L_t$. We then have
\begin{equation*}
\frac{d\,\P}{d\,\Q}\Big|_{\Fc_t}=\Lambda_t,~0\leq t \leq T \,.
\end{equation*}
Let $\tilde{\Lambda}$ be the $(\Q,\G)$-martingale given by $\tilde{\Lambda}_t=\E_\Q [\Lambda_t\big|\mathcal{G}_t ]$.
Recall the classical proposition (see for example \cite{lak98} or \cite{phaque01}), which gives the expression of $\tilde{\Lambda}$:

\begin{Lemma}\label{equation lambda tilde}
We have 
\begin{equation}\label{lambda prop}
\tilde{\Lambda}_t=\exp\Big(\int_0^t\tilde{\theta}_sd\tilde{W}_s-\frac{1}{2}\int_0^t|\tilde{\theta}_s|^2ds\Big) \,.
\end{equation}
\end{Lemma}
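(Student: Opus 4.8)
The plan is to identify $(\tilde{\Lambda}_t)$ as the solution of a linear stochastic differential equation driven by the innovation $\tilde{W}$, and then to solve it explicitly. First I would re-express $\Lambda=1/L$ as a stochastic exponential with respect to $\tilde{W}$ rather than $W$: since $dW_s=d\tilde{W}_s-\rho_s\,ds$ by (\ref{chang brownien tilde}), substitution in (\ref{L}) gives, after the obvious cancellation,
\begin{equation*}
\Lambda_t=\exp\Big(\int_0^t\rho'_s\,d\tilde{W}_s-\frac{1}{2}\int_0^t||\rho_s||^2\,ds\Big)=\mathcal{E}\Big(\int_0^{\cdot}\rho'_s\,d\tilde{W}_s\Big)_t,
\end{equation*}
so that $\Lambda_t=1+\int_0^t\Lambda_s\rho'_s\,d\tilde{W}_s$ on $[0,T]$. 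Here $\tilde{W}$ is a $(\Q,\F)$-Brownian motion which, by Proposition \ref{engendre} and L\'evy's characterization, is also a $(\Q,\mathbb{G})$-Brownian motion; moreover $\rho=\sigma^{-1}\mu$ is uniformly bounded under Assumption \ref{hypothese coeff}, hence so is $\tilde{\rho}$, and $\Lambda$ is a genuine $(\Q,\F)$-martingale by Assumption \ref{L mart}.

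Next I would take the $\mathbb{G}$-optional projection of this equation. Since $\Lambda$ is a true $(\Q,\F)$-martingale, $\tilde{\Lambda}_t=\E_\Q[\Lambda_t\,|\,\mathcal{G}_t]$ is a true $(\Q,\mathbb{G})$-martingale. Projecting onto $\mathbb{G}$ and using the classical filtering fact that conditional expectation commutes with a stochastic integral with respect to a Brownian motion of the observation filtration (see \cite{lak98,phaque01}; this is where the $\mathbb{G}$-adaptedness of $\tilde{W}$ is crucial), one gets $\tilde{\Lambda}_t=1+\int_0^t\widehat{(\Lambda\rho)}'_s\,d\tilde{W}_s$, where $\widehat{(\Lambda\rho)}$ is the $\mathbb{G}$-predictable projection of $\Lambda\rho$. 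The Kallianpur--Striebel (Bayes) formula then identifies $\widehat{(\Lambda\rho)}_s=\E_\Q[\Lambda_s\rho_s\,|\,\mathcal{G}_s]=\E[\rho_s\,|\,\mathcal{G}_s]\,\E_\Q[\Lambda_s\,|\,\mathcal{G}_s]=\tilde{\rho}_s\tilde{\Lambda}_s$. Hence $(\tilde{\Lambda}_t)$ solves $d\tilde{\Lambda}_t=\tilde{\Lambda}_t\tilde{\rho}'_t\,d\tilde{W}_t$ with $\tilde{\Lambda}_0=1$, and since $\tilde{\rho}$ is bounded this linear equation has the unique solution (\ref{lambda prop}). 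I expect the main obstacle to be exactly this projection step, i.e.\ justifying the commutation of conditional expectation with the stochastic integral and identifying the drift coefficient as $\tilde{\rho}_s\tilde{\Lambda}_s$ (equivalently, that $\Lambda_s$ carries, in the relevant conditional sense, no information about $\rho_s$ beyond $\mathcal{G}_s$); this is the classical content of the references quoted.

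An alternative route, which also makes uniqueness transparent, is a change-of-measure argument. Set $Z_t:=\mathcal{E}\big(\int_0^{\cdot}\tilde{\rho}'_s\,d\tilde{W}_s\big)_t$, a true $(\Q,\mathbb{G})$-martingale by boundedness of $\tilde{\rho}$, and define $\tilde{\P}$ on $(\Omega,\mathcal{G}_T)$ by $d\tilde{\P}/d\Q|_{\mathcal{G}_t}=Z_t$. By Girsanov in $\mathbb{G}$, $\bar{W}=\tilde{W}-\int_0^{\cdot}\tilde{\rho}_s\,ds$ is a $(\tilde{\P},\mathbb{G})$-Brownian motion, and since $Z$ is continuous the $\mathbb{G}$-compensator of $N$ is unchanged, so $\bar{M}=N-\int_0^{\cdot}\tilde{\lambda}_s\,ds$ remains a $(\tilde{\P},\mathbb{G})$-martingale (it is one under $\Q$ by Proposition \ref{M barre}). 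The same two properties hold under $\P|_{\mathcal{G}_T}$: $\bar{W}$ is a $(\P,\mathbb{G})$-Brownian motion by the proposition established above, and $\bar{M}$ is a $(\P,\mathbb{G})$-martingale because, $N$ having an absolutely continuous $(\P,\F)$-compensator, its $\mathbb{G}$-compensator under $\P$ is the projection $\int_0^{\cdot}\tilde{\lambda}_s\,ds$. Since $\mathcal{G}_T$ is generated by $(\tilde{W},N)$ (Remark \ref{WM}) and the law of a Brownian motion together with a point process of bounded $\mathbb{G}$-predictable intensity is uniquely determined, $\tilde{\P}$ and $\P$ agree on $\mathcal{G}_T$; therefore $Z_t=d\P/d\Q|_{\mathcal{G}_t}=\tilde{\Lambda}_t$, which is (\ref{lambda prop}).
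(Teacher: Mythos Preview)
The paper does not actually prove this lemma: it is stated as a ``classical proposition'' with a pointer to Lakner~\cite{lak98} and Pham--Quenez~\cite{phaque01}, and no argument is given. Your first approach---rewriting $\Lambda$ as the stochastic exponential of $\int_0^\cdot\rho'_s\,d\tilde W_s$, taking the $\G$-optional projection, and identifying the integrand via Kallianpur--Striebel as $\tilde\rho_s\tilde\Lambda_s$---is precisely the argument carried out in those references, so you are supplying the proof the paper omits. Your change-of-measure alternative is also valid and is a nice complement.

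One technical point worth tightening: the cited references work with a purely Brownian observation filtration, where the martingale representation of the $(\Q,\G)$-martingale $\tilde\Lambda$ automatically has only a $d\tilde W$-part. Here $\G$ is generated by $(\tilde W,\bar M)$ (Remark~\ref{WM}), so a priori $\tilde\Lambda_t=1+\int_0^t a'_s\,d\tilde W_s+\int_0^t b'_s\,d\bar M_s$, and your projection step implicitly assumes $b\equiv 0$. This is true, but deserves a sentence: for instance, since $\Lambda$ is continuous, the $(\Q,\F)$-predictable covariation $\langle\Lambda,M\rangle$ vanishes, and one checks from this (or directly from the jump of the conditional expectation at $\tau_i$) that $\tilde\Lambda$ has no jump, hence $b=0$. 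Your second route avoids this issue entirely, because the candidate $Z=\mathcal E(\int_0^\cdot\tilde\rho'\,d\tilde W)$ is continuous by construction.
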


\begin{Proposition}
The process $\bar{M}$ is a $(\P,\G)$-martingale.
\end{Proposition}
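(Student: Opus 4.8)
We must show that the innovation process $(\bar M_t) = N_t - \int_0^t \tilde\lambda_s\,ds$, which we already know is a $(\Q,\G)$-martingale by Proposition~\ref{M barre}, is in fact also a $(\P,\G)$-martingale.

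**Plan.** The natural route is a change-of-measure argument on the filtration $\G$, exactly parallel to how Girsanov works on $\F$. On $\F$ the density of $\P$ with respect to $\Q$ is $(\Lambda_t)$, and the corresponding $(\Q,\G)$-density is $(\tilde\Lambda_t) = \E_\Q[\Lambda_t\,|\,\Gc_t]$. The standard Bayes/abstract-Girsanov principle says that a $\G$-adapted process $(X_t)$ is a $(\P,\G)$-martingale if and only if $(X_t\tilde\Lambda_t)$ is a $(\Q,\G)$-martingale. So the first step is to invoke this characterization with $X_t = \bar M_t$, reducing the claim to: $(\bar M_t\,\tilde\Lambda_t)$ is a $(\Q,\G)$-martingale.

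**Key computation.** Next I would compute the $(\Q,\G)$-dynamics of the product $\bar M_t\tilde\Lambda_t$ by integration by parts. By Proposition~\ref{M barre}, $d\bar M_t$ is a $(\Q,\G)$-martingale increment with $[\bar M]_t = N_t$ (as noted in Remark~\ref{WM}); by Lemma~\ref{equation lambda tilde}, $d\tilde\Lambda_t = \tilde\Lambda_{t^-}\tilde\rho_t'\,d\tilde W_t$, a continuous $(\Q,\G)$-martingale. Since $\bar W$-type Brownian part of $\tilde\Lambda$ is continuous and $\bar M$ is a pure-jump martingale whose jumps coincide with those of $N$, the bracket $[\tilde\Lambda, \bar M]_t$ is either zero or easily handled — the covariation between a continuous process and $N$ vanishes in the continuous part, and the jump contribution is controlled because $\tilde\Lambda$ has no jumps. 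Integration by parts then gives
\[
d(\bar M_t\tilde\Lambda_t) = \bar M_{t^-}\,d\tilde\Lambda_t + \tilde\Lambda_{t^-}\,d\bar M_t + d[\tilde\Lambda,\bar M]_t,
\]
and I expect the bracket term to vanish (or to be a martingale increment), leaving a sum of $(\Q,\G)$-local-martingale terms. Thus $(\bar M_t\tilde\Lambda_t)$ is a $(\Q,\G)$-local martingale; an integrability check (using boundedness of $\lambda$, hence of $\tilde\lambda$, and the integrability of $\tilde\Lambda$ under $\Q$) upgrades it to a true martingale, and we conclude via the Bayes characterization.

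**Main obstacle.** The delicate point is the bracket $[\tilde\Lambda,\bar M]$ and, more fundamentally, justifying that $\tilde\Lambda$ has the representation of Lemma~\ref{equation lambda tilde} purely as a stochastic exponential driven by $\tilde W$ with \emph{no} jump part on $\G$ — this is what makes $[\tilde\Lambda,\bar M]$ manageable. One must also be careful that all martingale properties are stated on the right filtration $\G$ and the right measure, and that passing from local martingale to martingale is legitimate; the uniform boundedness of $(\lambda_t)$ and Assumption~\ref{rho borne} are what one leans on here. An alternative, perhaps cleaner, route is to write directly $\bar M_t = M_t + \int_0^t(\lambda_s - \tilde\lambda_s)\,ds$ (as in the proof of Proposition~\ref{M barre}), note that $(M_t)$ is a $(\P,\G)$-martingale is \emph{not} immediate — $M$ is an $\F$-martingale, not a priori a $\G$-one — so one still needs the projection argument: for $s \le t$, $\E_\P[\bar M_t - \bar M_s\,|\,\Gc_s]$ is computed by iterated conditioning, using $\E_\P[\lambda_u\,|\,\Gc_u] = \tilde\lambda_u$ together with the fact (to be verified) that $\E_\P[dN_u\,|\,\Gc_{u}] = \tilde\lambda_u\,du$, i.e. that $\tilde\lambda$ is the $\G$-intensity of $N$ under $\P$. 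Establishing this last identity — that $\G$-optional projection turns the $\F$-compensator $\lambda$ into its $\G$-conditional expectation $\tilde\lambda$ under $\P$ — is really the crux, and it follows from the general filtering fact that compensators commute with optional projection onto a subfiltration.
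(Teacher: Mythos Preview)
Your primary approach is correct and is essentially the paper's argument: the paper simply observes that $\tfrac{d\P}{d\Q}\big|_{\Gc_t}=\tilde\Lambda_t$ and invokes Girsanov's theorem, which is exactly your change-of-measure route---the crucial point you identify, that $\tilde\Lambda$ (by Lemma~\ref{equation lambda tilde}) is a purely continuous stochastic exponential driven by $\tilde W$ so that $[\tilde\Lambda,\bar M]=0$, is precisely what makes the Girsanov step go through. Your write-up just unpacks what the paper compresses into one line.
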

\begin{proof}
Since $\frac{d\,\P}{d\,\Q}\big|_{\Gc_t}=\tilde{\Lambda}_t$, we can apply Girsanov's theorem and we get that the process $\bar{M}$ is a $(\P,\G)$-martingale.
\end{proof}

By means of innovation processes, we can describe from (\ref{partiel actif S}) and (\ref{W bar}) the dynamics of the partially observed default model within a framework of full observation model
\begin{equation}
\left\{\begin{aligned}
dS_t & =S_{t^-}(\tilde{\mu}_t dt + \sigma_t d\bar{W}_t + \beta_t dN_t) \,,\\
d\bar{M}_t & = dN_t - \tilde{\lambda}_t dt \,,
\end{aligned}\right.
\end{equation}
where $\tilde{\mu}$ and $\tilde{\lambda}$ are $\G$-predictable processes.  \\
Hence, the operations of filtering and control can be put in sequence and thus separated.

\subsection{Optimization problem for the classical utilities}
To apply the results of Section \ref{section puissance}, it is sufficient to have a martingale representation theorem for $(\P,\G)$-martingales with respect to $\Wb$ and $\Mb$. Notice it cannot be directly derived from the usual martingale representation theorem since $\G$ is not equal to the filtration generated by $\Wb$ and $\Mb$.
\begin{Lemma}\label{representation}
Any $(\P,\G)$-local martingale has the representation 
\begin{equation}\label{equation representation2}
m_t=m_0+\int_0^t a_s d\Wb_s + \int_0^t b_s d\Mb_s, ~\forall\,t\in[0,T]~a.s. \,,
\end{equation}
where $a\in L^2_{loc}(\Wb)$ and $b\in L^1_{loc}(\Mb)$. If $m$ is a square integrable martingale, each term on the right-hand side of the representation (\ref{equation representation2}) is square integrable. 
\end{Lemma}
\noindent The proof of this lemma is postponed in Appendix \ref{Preuve du theoreme de representation}.\\

It is now possible to apply the previous results because the price process evolves according to the equation
 \begin{equation*}
\left\{\begin{aligned}
dS_t & = S_{t^-}(\tilde{\mu}_tdt + \sigma_t d\bar{W}_t + \beta_tdN_t),\\
d\bar{M}_t & =dN_t-\tilde{\lambda}_tdt,
\end{aligned}\right.
\end{equation*}
where $\tilde \mu$, $\tilde{\lambda}$ and $\sigma$ are $\G$-adapted, and $\beta$ is $\G$-predictable,
 and there exists a martingale representation theorem for $(\P,\G)$-martingales. We get the following characterization for the value functions and the optimal strategies when they exist.

For the logarithmic utility function, we assume that $\beta^{-1}$ is uniformly bounded, and we have:
\begin{Theorem}
The solution of the optimization problem for the logarithmic utility function is given by 
 \begin{equation*}
 V(x)=\log(x)+\E\Big [\int_0^T\big(\hat{\pi}_t\tilde{\mu}_t-\frac{|\hat{\pi}_t\sigma_t|^2}{2}+\tilde{\lambda}_t\log(1+\hat{\pi}_t\beta_t)\big)dt\Big],
  \end{equation*}
 with $\hat{\pi}$ the optimal trading strategy defined by
  \begin{equation*}
 \hat{\pi}_t=\left\{\begin{aligned}
& \frac{\tilde{\mu}_t}{2\sigma^2_t}-\frac{1}{2\beta_t}+\frac{\sqrt{(\tilde{\mu}_t\beta_t+\sigma^2_t)^2+4\tilde{\lambda}_t\beta^2_t\sigma^2_t}}{2\beta_t\sigma^2_t} ~~\text{ if} ~t \leq \tau~\text{and } \beta_t\neq 0,  \\ 
&\frac{\tilde{\mu}_t}{\sigma^2_t} ~~\text{ if} ~t \leq \tau~\text{and } \beta_t= 0 ~\text{or } t > \tau.
\end{aligned}\right.
 \end{equation*}
\end{Theorem}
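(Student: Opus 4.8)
The plan is to reduce the problem to the full-information setting already treated in Section \ref{section logarithme}. The key observation is that, thanks to the filtering step of Section \ref{filtrage}, the price process satisfies
\begin{equation*}
dS_t=\text{diag}(S_t)\big(\tilde\mu_tdt+\sigma_td\bar W_t+\beta_tdN_t\big),\qquad d\bar M_t=dN_t-\tilde\lambda_tdt,
\end{equation*}
where $(\bar W_t)$ is a $(\P,\G)$-Brownian motion, $(\bar M_t)$ is a $(\P,\G)$-martingale, and all the coefficients $\tilde\mu,\sigma,\beta,\tilde\lambda$ are $\G$-predictable; moreover, by Lemma \ref{representation}, every $(\P,\G)$-local martingale admits a representation against $\bar W$ and $\bar M$. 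Thus the partially observed model has exactly the structure of the full-information model of Section \ref{modele}, with $\F$ replaced by $\G$, $W$ by $\bar W$, $M$ by $\bar M$, $\mu$ by $\tilde\mu$ and $\lambda$ by $\tilde\lambda$. Since the proof of the full-information Theorem in Section \ref{section logarithme} uses only the Doléans--Dade expansion \eqref{dolean}, the admissibility definition, and the boundedness of the coefficients (note $\tilde\mu$ and $\tilde\lambda$ inherit uniform boundedness from $\mu$ and $\lambda$ by conditional expectation), the same argument applies verbatim in the $\G$-setting.

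Concretely, I would first write, for any $\G$-predictable admissible $\pi$,
\begin{equation*}
\log(X^{\pi}_t)=\int_0^t\Big(\pi_s\tilde\mu_s-\frac{|\pi_s\sigma_s|^2}{2}\Big)ds+\int_0^t\pi_s\sigma_sd\bar W_s+\int_0^t\log(1+\pi_s\beta_s)\big(d\bar M_s+\tilde\lambda_sds\big),
\end{equation*}
which is licit because the admissibility condition forces $\pi_t\beta_t>-1$ on $[0,\tau]$ and the integrability conditions (with $\tilde\lambda$ in place of $\lambda$) hold. Taking expectations kills the two martingale terms, so
\begin{equation*}
V(1)=\sup_{\pi\in\Ac}\E\Big[\int_0^T\Big(\pi_t\tilde\mu_t-\frac{|\pi_t\sigma_t|^2}{2}+\tilde\lambda_t\log(1+\pi_t\beta_t)\Big)dt\Big]\le\E\Big[\int_0^T\esssup_{\pi_t\beta_t>-1}\{\cdots\}dt\Big].
\end{equation*}

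Next I would compute the pointwise maximizer. For fixed $(\omega,t)$ with $t<\tau$ and $\beta_t\neq0$, the map $\pi\mapsto\pi\tilde\mu_t-\tfrac12|\pi\sigma_t|^2+\tilde\lambda_t\log(1+\pi\beta_t)$ is strictly concave on $(-1/\beta_t,\infty)$ (if $\beta_t>0$; on the appropriate interval otherwise), so the first-order condition $\tilde\mu_t-\pi\sigma_t^2+\tilde\lambda_t\beta_t/(1+\pi\beta_t)=0$ has a unique solution; clearing denominators gives a quadratic in $\pi$, and selecting the root lying in the admissible interval yields precisely the stated $\hat\pi_t$ with $\tilde\mu_t,\tilde\lambda_t$ in the roles of $\mu_t,\lambda_t$. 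When $\beta_t=0$ or $t\ge\tau$ the problem degenerates to the classical quadratic and gives $\hat\pi_t=\tilde\mu_t/\sigma_t^2$. Finally, under the hypothesis that $\beta_t^{-1}$ is uniformly bounded, $\hat\pi$ is $\G$-predictable and satisfies the admissibility integrability bounds (the coefficients $\tilde\mu,\sigma,\beta,\tilde\lambda$ being bounded and $\hat\pi$ being bounded as a consequence), so $\hat\pi\in\Ac$; hence the displayed upper bound for $V(1)$ is attained and $\hat\pi$ is optimal, and $V(x)=\log(x)+V(1)$ gives the claimed formula.

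The only genuinely non-routine ingredient is the martingale representation theorem for $(\P,\G)$-martingales (Lemma \ref{representation}), needed to legitimize the whole filtering reduction and the passage to the $\G$-model; but that is proved separately in Appendix \ref{Preuve du theoreme de representation} and may be invoked here. Everything else is a transcription of the full-information argument, with the coefficients replaced by their $\G$-filtered versions, and the verification that uniform boundedness of $\beta^{-1}$ suffices for admissibility of $\hat\pi$.
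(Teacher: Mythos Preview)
Your proposal is correct and follows exactly the paper's approach: the paper does not give a separate proof for this theorem but simply remarks that, after the filtering reduction of Section~\ref{filtrage} (which yields $\G$-predictable coefficients $\tilde\mu,\sigma,\beta,\tilde\lambda$, a $(\P,\G)$-Brownian motion $\bar W$, and a $(\P,\G)$-martingale $\bar M$), one may apply the results of Section~\ref{section logarithme} verbatim with $(\mu,\lambda,W,M)$ replaced by $(\tilde\mu,\tilde\lambda,\bar W,\bar M)$. Your write-up in fact supplies more detail than the paper does, including the explicit verification that the conditional expectations $\tilde\mu,\tilde\lambda$ inherit boundedness and that $\hat\pi$ is admissible; one small overstatement is that the martingale representation Lemma~\ref{representation} is not actually used in the logarithmic argument itself (only the Dol\'eans--Dade expansion and the fact that $\bar W,\bar M$ are $(\P,\G)$-martingales), though it is of course essential for the power and exponential cases treated next.
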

Therefore, we can see that the optimal portfolio in the case of partial information can be formally derived from the full information case by replacing the unobservable coefficients $\mu_t$ and $\lambda_t$ by theirs estimates $\tilde{\mu}_t$ and $\tilde{\lambda}_t$. \\

For the power utility function, we have:
\begin{Theorem}
\begin{itemize}
\item Let $(\bar{Y}, \bar{Z}, \bar{U})$ be the minimal solution in $L^{1,+} \times L^2_{loc}(\Wb) \times L^1_{loc}(\Mb)$ of the BSDE (\ref{BSDE puissance})  and $(W,\,M,\,\mu,\,\lambda)$ replaced by $(\Wb,\,\Mb,\,\tilde{\mu},\,\tilde{\lambda})$, then 
\begin{equation*}
\bar{Y}_t=\esssup_{\pi\in\Ac_t}\E\big[(X^{t,\pi}_T)^\gamma\big|\Gc_t\big]~a.s.
\end{equation*}
\item If a strategy $\hat{\pi}\in \Ac$ is optimal for $J_0=\sup_{\pi\in\Ac}\E[(X_T^{\pi})^\gamma]$ then $\hat{\pi}$ attains the essential supremum in the generator of the BSDE (\ref{BSDE puissance}) $dt\otimes d\P$ $a.s.$
\item Moreover the process $\bar{Y}$ is the non-decreasing limit of the process $(\bar{Y}^k)_{k\in\N}$, where $(\bar{Y}^k, \bar{Z}^k, \bar{U}^k)$ is the solution in $\Sc^2\times L^2(\Wb)\times L^2(\Mb)$ of the BSDE (\ref{EDSR lipschitz}) and $(W, M, \mu, \lambda)$ replaced by $(\Wb, \Mb, \tilde{\mu}, \tilde{\lambda})$.
\end{itemize}
\end{Theorem}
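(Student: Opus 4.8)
The plan is to reduce the partially observed problem to the fully observed power‑utility problem of Section~\ref{section puissance}, applied to the \emph{filtered} model driven by the innovation processes $\Wb$ and $N$. First I would rewrite the wealth process in these variables: differentiating~(\ref{dolean n}) gives $dX^{x,\pi}_t=X^{x,\pi}_{t^-}\,\pi_t'(\mu_t\,dt+\sigma_t\,dW_t+\beta_t\,dN_t)$, and substituting $\sigma_t\,dW_t=\sigma_t\,d\Wb_t-(\mu_t-\tilde{\mu}_t)\,dt$, which follows from~(\ref{W bar}), yields $dX^{x,\pi}_t=X^{x,\pi}_{t^-}\,\pi_t'(\tilde{\mu}_t\,dt+\sigma_t\,d\Wb_t+\beta_t\,dN_t)$. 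Hence $X^{x,\pi}$ has exactly the Dol\'eans--Dade form~(\ref{dolean n}) with $(\mu,W)$ replaced by $(\tilde{\mu},\Wb)$, and consequently, for every $\G$-predictable admissible $\pi$, the process $J^\pi_t=\E[(X^{t,\pi}_T)^\gamma|\Gc_t]$ solves — using Lemma~\ref{representation} to represent this $(\P,\G)$-martingale — the same linear BSDE as in Section~\ref{section puissance} with $(W,M,\mu,\lambda)$ replaced by $(\Wb,\Mb,\tilde{\mu},\tilde{\lambda})$.

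Next I would verify that the filtered model $(S,\Wb,\Mb,\tilde{\mu},\tilde{\lambda})$ on $(\Omega,\Fc,\P)$ equipped with $\G$ satisfies all the structural hypotheses used in Section~\ref{section puissance}: the coefficients $\sigma,\beta$ still obey Assumption~\ref{hypothese coeff}; the processes $\tilde{\mu}_t=\E[\mu_t|\Gc_t]$ and $\tilde{\lambda}_t=\E[\lambda_t|\Gc_t]$ admit $\G$-predictable versions (as in Section~\ref{filtrage}) and remain uniformly bounded, since conditional expectation is a contraction on $L^\infty$; $\Wb$ is a $(\P,\G)$-Brownian motion and $\Mb=N-\int_0^{\cdot}\tilde{\lambda}_s\,ds$ is the $(\P,\G)$-compensated martingale of $N$, both already established; and, the key point, by Lemma~\ref{representation} every $(\P,\G)$-local martingale has a representation with respect to $\Wb$ and $\Mb$ with the same integrability as in Lemma~\ref{theoreme representation}. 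The admissible classes are now the $\G$-predictable versions of $\Ac$, $\Ac_t$, $\Ac^k_t$; since the class of $\G$-predictable admissible strategies is contained in the $\F$-predictable one, the standing finiteness assumption on $\sup_{\pi}\E[(X^\pi_T)^\gamma]$ is inherited, and \cite{krasch99} still provides an optimal strategy $\hat{\pi}$ with $J_0=\E[(X^{\hat{\pi}}_T)^\gamma]$.

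With these ingredients the whole of Section~\ref{section puissance} transfers verbatim under the substitution $(W,M,\mu,\lambda,\F)\mapsto(\Wb,\Mb,\tilde{\mu},\tilde{\lambda},\G)$: Lemma~\ref{Jk borne} uses only boundedness of the coefficients; Proposition~\ref{unicite} uses the martingale representation theorem, the comparison theorem for BSDEs with jumps, and a predictable selection argument; Propositions~\ref{J(t) plus petite surmartingale} and~\ref{Critere optimalite} use only the dynamic programming principle and the existence result of~\cite{krasch99}; Theorems~\ref{Proposition BSDE J} and~\ref{J limite} follow by combining these. Transcribing the proofs yields, in order, that $(\bar{Y}_t,\bar{Z}_t,\bar{U}_t)$ is the minimal solution in $L^{1,+}\times L^2_{loc}(\Wb)\times L^1_{loc}(\Mb)$ of~(\ref{BSDE puissance}) with the substituted data and that $\bar{Y}_t=\esssup_{\pi\in\Ac_t}\E[(X^{t,\pi}_T)^\gamma|\Gc_t]$; that any $\hat{\pi}\in\Ac$ optimal for $J_0$ attains the essential supremum in the generator $dt\otimes d\P$-a.s.; and that $\bar{Y}_t=\lim_{k\to\infty}\uparrow\bar{Y}^k_t$, where $\bar{Y}^k$ is the solution in $\Sc^2\times L^2(\Wb)\times L^2(\Mb)$ of the substituted version of~(\ref{EDSR lipschitz}).

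The only genuinely new ingredient, and hence the main obstacle, is the martingale representation Lemma~\ref{representation}: it cannot be deduced from the classical representation theorem because, unlike in the full‑information case, $\G$ is in general strictly larger than the filtration generated by $\Wb$ and $\Mb$. Once that lemma is granted — it is proved in Appendix~\ref{Preuve du theoreme de representation} — the rest is purely a transcription of Section~\ref{section puissance}, the only points that require a little care being the $\G$-predictability and uniform boundedness of $\tilde{\mu}$ and $\tilde{\lambda}$ and the precise identification of the admissible strategy classes in the filtered model.
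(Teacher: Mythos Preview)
Your proposal is correct and follows exactly the approach the paper takes: the paper does not give a separate proof of this theorem but states that, once the filtering of Section~\ref{filtrage} and the martingale representation Lemma~\ref{representation} are in place, the results of Section~\ref{section puissance} apply verbatim with $(W,M,\mu,\lambda,\F)$ replaced by $(\Wb,\Mb,\tilde{\mu},\tilde{\lambda},\G)$. Your identification of Lemma~\ref{representation} as the only genuinely new ingredient, and your checklist of hypotheses to verify (boundedness and $\G$-predictability of $\tilde{\mu},\tilde{\lambda}$, inherited finiteness for \cite{krasch99}), are precisely the points the paper relies on implicitly.
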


\subsection{Optimization problem for the exponential utility function and indifference pricing}
We can also apply the results of \cite{limque09} for the exponential utility function. In this case, we assume that the agent faces some liability, which is modeled by a random variable $\xi$ (for example, $\xi$ may be a contingent claim written on some default events affecting the prices of the underlying assets). We suppose that $\xi$ is a non-negative $\Gc_T$-adapted process (note that all the results still hold under the assumption that $\xi$ is only lower bounded). Without loss of generality we can use a somewhat different notion of trading strategy: $\phi_t$ corresponds to the amount of money invested in the asset at time $t$. The number of shares is $\phi_t / S_t$. With this notation, under the assumption that the trading strategy is self-financing, the wealth process $X^{x,\phi}$ associated with a trading strategy $\phi$ and an initial capital $x$ is equal to 
\begin{equation*}
X^{x,\phi}_t=x+\int_0^t\phi_s\tilde{\mu}_sds+\int_0^t\phi_s\sigma_sdW_s+\int_0^t\phi_s\beta_sdN_s.
\end{equation*}
Our goal is to solve the optimization problem for an agent who buys a contingent claim $\xi$
\begin{equation}
V(x,\xi)=\sup\limits_{\phi\in \mathcal{A}(x)}\E\big[-\exp\big(-\gamma \big( X_T^{x,\phi}+\xi\big)\big)\big]=\exp(-\gamma x)V(0,\xi),
\label{pb contingent}
\end{equation}
where $\mathcal{A}(x)$ is defined by:

\begin{Definition}\emph{ \label{portefeuille admissible}
The set of admissible trading strategies $\Ac(x)$ consists of all $\G$-predictable processes $\phi=(\phi_t)_{0\leq t\leq T}$, which satisfy $\int_0^T|\phi_t\sigma_t|^2ds+\int_0^T|\phi_t\beta_t|^2\tilde{\lambda}_tdt<\infty,~\P-a.s.$ and such that for any $\phi$ fixed and any $t\in[0,T]$, there exists a constant $K_{t,\pi}$ such that for any $s \in [t,T]$, we have $X^{t,\pi}_s \geq K_{t,\pi},~\P-a.s.$
}\end{Definition}

To solve this problem, it is sufficient to study the case $x=0$. For that we give a dynamic extension of the initial problem as in Section \ref{section puissance}. For any initial time $t\in [0,T]$, we define the value function $J^\xi(t)$ by the following random variable
\begin{equation*}
J^\xi(t)= \essinf \limits_{\phi \in \mathcal{A}_t} \E\big[\exp\big(-\gamma \big(X_T^{t,0,\phi}+\xi\big)\big)\big| \mathcal{G}_t\big],
\end{equation*}
with $\mathcal{A}_t$ is the admissible portfolio strategies set defined by:

\begin{Definition}\emph{
The set of admissible trading strategies $\Ac_t$ consists of all $\G$-predictable processes $\phi=(\phi_s)_{t\leq s\leq T}$, which satisfy $\int_t^T|\phi_s\sigma_s|^2ds+\int_t^T|\phi_s\beta_s|^2\tilde{\lambda}_sds<\infty,~\P-a.s.$ and such that for any $\phi$ fixed and any $s\in[t,T]$, there exists a constant $K_{s,\pi}$ such that for any $u \in [s,T]$, we have $X^{s,\pi}_u \geq K_{s,\pi},~\P-a.s.$
}\end{Definition}
\noindent We introduce the two following sets:
\begin{itemize}
\item $\Sc^{+,\infty}$ is the set of positive $\G$-adapted $\P$-essentially bounded c\`ad-l\`ag processes on $[0,T]$.
\item $\Ac^2$ is the set of the increasing adapted c\`{a}d-l\`{a}g processes $K$ such that $K_0=0$ and $\E|K_T|^2<\infty$.
\end{itemize}
By applying the results of the companion paper Lim and Quenez \cite{limque09}, we get the following characterizations of the value function:
\begin{Theorem}\label{theoreme utilite exponentielle}
\begin{itemize}
\item Let $(\bar{Y},\bar{Z},\bar{U},\bar{K})$ be the maximal solution\footnote{That is for any solution $(\bar{J},\bar{Z},\bar{U},\bar{K})$ of the BSDE (\ref{EDSR exponentielle}) in $\Sc^{+,\infty} \times L^2(\bar{W}) \times L^2(\bar{M})\times \Ac^2$, we have $ \bar{J}_t\leq J_t,~\forall\, t\in [0,T],~\P-a.s.$} in $\Sc^{+,\infty} \times L^2(\bar{W}) \times L^2(\bar{M})\times \Ac^2$ of
\begin{equation}\label{EDSR exponentielle}
\left\{
\begin{aligned}
-\,d\bar{Y}_t=&-\bar{Z}_td\Wb_t-\bar{U}_td\Mb_t-d\bar{K}_t+\essinf\limits_{\phi\in \mathcal{A}}\Big\{ \frac{\gamma^2}{2}|\phi_t\sigma_t|^2\bar{Y}_t   -\gamma\phi_t(\tilde{\mu}_t\bar{Y}_t+\sigma_t\bar{Z}_t)\\
& -\big(1-e^{-\gamma \phi_t\beta_t}\big)(\tilde{\lambda}_t\bar{Y}_t+\tilde{\lambda}_t\bar{U}_t)\Big\}dt,\\
 \bar{Y}_T=&\;\exp(-\gamma \xi),
\end{aligned}
\right.
\end{equation}
then $\bar{Y}_t=\bar{J}^\xi(t),~\P-a.s.$, for any $t\in [0,T]$.
\item $\bar{J}^\xi(t) = \lim_{n \rightarrow \infty} \downarrow \bar{J}^{ \xi, k}(t)$, with $\bar{J}^{\xi, k}(t) = \essinf_{\phi \in \Ac^k_t} \E[\exp(-\gamma (X_T^{t,0,\phi}+\xi))| \mathcal{G}_t]$ and $\Ac^k_t$ is the set of strategies of $\Ac_t$ uniformly bounded by $k$.
\item Let $(\bar{Y}^k,\bar{Z}^k,\bar{U}^k)$ be the unique solution in $\Sc^2\times L^2(\Wb)\times L^2(\Mb)$ of the following BSDE
\begin{equation}\label{edsr k}
\left\{
\begin{aligned}
-\,d \bar{Y}^{k}_t=&-\bar{Z}^{k}_td\bar{W}_t-\bar{U}^{k}_td\bar{M}_t+\essinf\limits_{\phi\in \mathcal{A}^k}\Big\{ \frac{\gamma^2}{2}|\phi_t\sigma_t|^2\bar{Y}^{k}_t  -\gamma\phi_t(\tilde{\mu}_t \bar{Y}^{k}_t+\sigma_t\bar{Z}^k_t)\\
&-(1-e^{-\gamma\phi_t\beta_t})( \tilde{\lambda}_t\bar{Y}^{k}_t+\tilde{\lambda}_t\bar{U}^{k}_t)\Big\}dt , \\
 \bar{Y}^{k}_T=&\;\exp(-\gamma \xi),
\end{aligned}
\right.
\end{equation}
then $\bar{Y}^k_t=\bar{J}^{\xi,k}(t),~\P-a.s.$, for any $t\in [0,T]$.
\end{itemize}
\end{Theorem}

We can now define the indifference pricing of the contingent claim $\xi$. The Hodges approach to pricing of unhedgeable claims is a utility-based approach and can be summarized as follows: the issue at hand is to assess the value of some (defaultable) claim $\xi$ as seen from the perspective of an investor who optimizes his behavior relative to some utility function, in our case we use the exponential utility function. The investor     
has two choices: 
\begin{itemize}
\item he only  invests in the risk-free asset and in the risky assets, in this case the associated optimization problem is
\begin{equation*}
V(x,0)=\sup\limits_{\phi\in\Ac(x)}\E\big[-\exp\big(-\gamma\big(X_T^{x,\phi}\big)\big)\big],
\end{equation*}  
\item he also invests in the contingent claim, whose price is $\bar{p}$ at $0$, in this case the associated optimization problem is
\begin{equation*}
V(x-\bar{p},\xi)=\sup\limits_{\phi\in\Ac(x-\bar p)}\E\big[-\exp\big(-\gamma\big(X_T^{x-\bar{p},\phi}+\xi\big)\big)\big].
\end{equation*}
\end{itemize}  
\begin{Definition}\emph{
For a given initial capital $x$, the Hodges buying price of a defaultable claim $\xi$ is the price $\bar{p}$ such that the investor's value functions are indifferent between holding and not holding the contingent claim, i.e.
\begin{equation*}
V(x,0)=V(x-\bar{p},\xi).
\end{equation*}
}\end{Definition}
 
The Hodges price $\bar{p}$ can be derived explicitly by applying the results of Theorem \ref{theoreme utilite exponentielle}. If the agent buys the contingent claim at the price $\bar{p}$ and invests the rest of his wealth in the risk-free asset and in the risky assets, the value function is equal to
\begin{equation*}
V(x-\bar{p},\xi)=-\exp(-\gamma(x-\bar{p}))\bar{J}^\xi(0).
\end{equation*}
If he invests all his wealth in the risk-free asset and in the risky assets, the value function is equal to
\begin{equation*}
V(x,0)=-\exp(-\gamma x)\bar{J}^0(0).
\end{equation*}
 The Hodges price for the contingent claim $\xi$ is clearly given by the formula
 \begin{equation*}
\bar{p}=\frac{1}{\gamma}\ln\Big(\frac{\bar{J}^0(0)}{\bar{J}^\xi(0)}\Big).
 \end{equation*}

\begin{Remark}
\emph{If we restrict the admissible strategies to the bounded set $\Ac^k$, the indifference price $\bar{p}^k$ can also be defined by the same method. More precisely,
 \begin{equation*}
 \bar{p}^k=\frac{1}{\gamma}\ln\Big(\frac{\bar{J}^{0,k}(0)}{\bar{J}^{\xi,k}(0)}\Big),
 \end{equation*}
 where $\bar{J}^{\xi,k}(0)$ is defined in Theorem \ref{theoreme utilite exponentielle}.\\
 Note that
 \begin{equation*}
 \bar{p}=\lim\limits_{k\rightarrow \infty}\bar{p}^k.
 \end{equation*}
 This allows to approximate the indifference price by numerical computation, since the value functions $(\bar{J}^{\xi,k}(t))_{k\in \N}$ are the solution of a Lipschitz BSDE and the results of \cite{boueli08} can be applied.}
\end{Remark}

We assume that there are two kinds of agents in the market: the insider agents and the classical agents. We define the information price $d$ for a contingent claim as the difference between the buying price for a classical agent and the buying price for an insider agent. The buying price, if the agent knows the full information, is defined by (see \cite{limque09})
\begin{equation*}
p=\frac{1}{\gamma}\ln\Big(\frac{J^0(0)}{J^\xi(0)}\Big),
\end{equation*}
where $(J^\xi,Z,U,K)$ is the maximal solution of the BSDE (\ref{EDSR exponentielle}) with $(\Wb,\,\Mb,\,\tilde{\mu},\,\tilde{\lambda})$ replaced by $(W,\,M,\,\mu,\,\lambda)$.\\
Then the benefit of an insider agent who has a full information is the information price
\begin{equation*}
d=\bar{p}-p.
\end{equation*}
This price can be computed as the limit of the information prices $(d^k)_{k\in \N}$, where $d^k$ is the information price if we restrict the admissible strategies to the bounded set $\Ac^k$
\begin{equation*}
d^k=\frac{1}{\gamma}\Big(\ln\Big(\frac{\bar{J}^{0,k}(0)}{J^{0,k}(0)}\Big)-\ln\Big(\frac{\bar{J}^{\xi,k}(0)}{J^{\xi,k}(0)}\Big)\Big),
\end{equation*}
where $J^{\xi,k}$ is the solution of the BSDE (\ref{edsr k}) with $(\Wb,\,\Mb,\,\tilde{\mu},\,\tilde{\lambda})$ replaced by $(W,\,M,\,\mu,\,\lambda)$.\\
Then we have
\begin{equation*}
d=\lim\limits_{k\rightarrow \infty} d^k.
\end{equation*}

\vspace{1cm}

\appendix
\begin{center}{\Large{\textbf{Appendix}}}\end{center}

\section{Proof of Propositions \ref{J(t) plus petite surmartingale} and \ref{Critere optimalite}} \label{Preuve des propositions}
\setcounter{equation}{0} \setcounter{Assumption}{0}
\setcounter{Theorem}{0} \setcounter{Proposition}{0}
\setcounter{Corollary}{0} \setcounter{Lemma}{0}
\setcounter{Definition}{0} \setcounter{Remark}{0}
The proofs of these two propositions are based on the following lemma:
\begin{Lemma}\label{stable}
The set $\left\{J^\pi_t,~\pi\in\Ac_t\right\}$ is stable by supremum for any $t\in [0,T]$, i.e. for any $\pi^1,~\pi^2  \in  \Ac_t$, there exists $ \pi \in \Ac_t$ such that $ J^{\pi}_t =J^{\pi^1}_t \vee J^{\pi^2}_t $.\\
Furthermore, there exists a sequence $(\pi^n)_{n \in \mathbb{N}} \in \Ac_t$ for any $t\in [0,T]$, such that
\begin{equation*}
J(t)=\lim\limits_{n \rightarrow \infty} \uparrow J^{\pi^n}_t ,~\P-a.s.
\end{equation*}
\end{Lemma}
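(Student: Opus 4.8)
The strategy is the standard one for showing that an essential-supremum value function is attained as an increasing limit along a sequence, via a measurable-selection/pasting argument. The key observation is that the candidate strategy achieving $J^{\pi^1}_t \vee J^{\pi^2}_t$ is obtained by switching between $\pi^1$ and $\pi^2$ on the $\Fc_t$-measurable set where one dominates the other.

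\medskip

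First I would fix $t\in[0,T]$ and take $\pi^1,\pi^2\in\Ac_t$. Set $A=\{J^{\pi^1}_t\geq J^{\pi^2}_t\}\in\Fc_t$, and define the spliced strategy $\pi_s=\pi^1_s\1_A+\pi^2_s\1_{A^c}$ for $s\in[t,T]$. Since $A\in\Fc_t$ and both $\pi^1,\pi^2$ are $\F$-predictable, $\pi$ is $\F$-predictable on $[t,T]$, and the integrability and admissibility conditions defining $\Ac_t$ are satisfied separately on $A$ and $A^c$, hence $\pi\in\Ac_t$. The point is then that the wealth dynamics are multiplicative in the strategy: from the explicit formula for $X^{t,x,\pi}$ one has $X^{t,1,\pi}_T=X^{t,1,\pi^1}_T\1_A+X^{t,1,\pi^2}_T\1_{A^c}$, because on $[t,T]$ the wealth starting from $1$ at time $t$ only sees the strategy's values on $[t,T]$, and these coincide with $\pi^1$ on $A$, with $\pi^2$ on $A^c$. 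Taking conditional expectation given $\Fc_t$ and using $A\in\Fc_t$:
\begin{equation*}
J^{\pi}_t=\E\big[(X^{t,1,\pi}_T)^\gamma\big|\Fc_t\big]=\1_A\,\E\big[(X^{t,1,\pi^1}_T)^\gamma\big|\Fc_t\big]+\1_{A^c}\,\E\big[(X^{t,1,\pi^2}_T)^\gamma\big|\Fc_t\big]=\1_A J^{\pi^1}_t+\1_{A^c}J^{\pi^2}_t=J^{\pi^1}_t\vee J^{\pi^2}_t,
\end{equation*}
which proves stability by pairwise supremum.

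\medskip

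Next I would upgrade this to a countable supremum: the family $\{J^\pi_t:\pi\in\Ac_t\}$ is closed under pairwise maximum, so by the classical characterization of the essential supremum of such a directed (upward-filtering) family, there exists a sequence $(\pi^n)_{n\in\N}\subset\Ac_t$ with $J^{\pi^n}_t\uparrow\esssup_{\pi\in\Ac_t}J^\pi_t=J(t)$ $\P$-a.s. Concretely, pick any sequence $(\tilde\pi^n)$ with $\sup_n J^{\tilde\pi^n}_t=J(t)$ $\P$-a.s., then replace it by $\pi^n$ defined recursively so that $J^{\pi^n}_t=J^{\pi^{n-1}}_t\vee J^{\tilde\pi^n}_t$ using the pasting construction above; monotonicity and convergence to $J(t)$ are then immediate.

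\medskip

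\textbf{Main obstacle.} The only genuinely delicate point is the identity $X^{t,1,\pi}_T=X^{t,1,\pi^1}_T\1_A+X^{t,1,\pi^2}_T\1_{A^c}$ together with the verification that the spliced strategy is admissible; both rest on the fact that the sets involved are $\Fc_t$-measurable and on the explicit exponential form \eqref{dolean n} of the wealth process, so no measurable-selection theorem is even needed here (that is deferred to Proposition \ref{unicite} and Theorem \ref{Proposition BSDE J}). Care must be taken that all the a.s.-equalities are handled on a common null set, and that $\gamma\in(0,1)$ so $x\mapsto x^\gamma$ is monotone, making the conditional expectations split cleanly over $A$ and $A^c$.
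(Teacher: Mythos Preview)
Your proof is correct and follows exactly the same approach as the paper: define the event $E=\{J^{\pi^1}_t\geq J^{\pi^2}_t\}\in\Fc_t$, paste $\pi^1$ and $\pi^2$ on $E$ and $E^c$, and invoke the classical upward-filtering characterization of the essential supremum (the paper cites Neveu). Your write-up is in fact more detailed than the paper's, which simply asserts that admissibility and the identity $J^\pi_t=J^{\pi^1}_t\vee J^{\pi^2}_t$ are ``obvious''; the only superfluous remark is the one about monotonicity of $x\mapsto x^\gamma$, since the splitting of the conditional expectation over $A$ and $A^c$ follows purely from $A\in\Fc_t$.
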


\begin{proof}
Let us introduce the set $E=\{J ^{\pi^1}_t  \geq J ^{\pi^2}_t)\}$ which belongs to $\Fc_t$. Let us define the strategy $\pi$ by the formula $\pi_s=\pi_s^1\mathds{1}_{E}+\pi_s^2\mathds{1}_{E^c}$ for any $s\in[t,T]$. It is obvious that $\pi\in \Ac_t$. And by construction of $\pi$, it is clear that $J ^{\pi}_t =J ^{\pi^1}_t \vee J ^{\pi^2}_t $.\\
The second part of the lemma follows by classical results on the essential supremum (see \cite{nev75}).
\end{proof}

We first prove that the process $(X^{\pi})^\gamma J(.)$ is a supermartingale for any $\pi \in \mathcal{A}$. For that it is sufficient to show for any $s\leq t$ that
\begin{equation*}
\E\big[(X_t^{s,\pi})^\gamma J(t) \big|\Fc_s\big] \leq J(s)~a.s.
\end{equation*}
By Lemma \ref{stable}, there exists a sequence $(\pi^n)_{n\in\N}$ of $\Ac_t$ such that $J(t)=\lim\uparrow J^{\pi^n}_t$ a.s. We define the strategy $\tilde{\pi}^n$ by $\tilde{\pi}^n_u=
\pi_u \1_{[s,t]}(u)+\pi^n_u \1_{]t,T]}(u)$, which is clearly admissible. By the monotone convergence theorem and using the definition of $J(.)$, one can easily show that
\begin{equation*}
\E\big[(X_t^{s,\pi})^\gamma J(t)\big|\Fc_s\big]=\lim_{n\rightarrow \infty}\uparrow \E\big[(X_T^{s,\tilde{\pi}^n})^\gamma\big|\Fc_s\big]\leq J(s)~a.s.
\end{equation*}
Hence, the process $(X^{\pi})^\gamma J(.)$ is a supermartingale for any $\pi\in\Ac$. \\

Second, we prove that $J(.)$ is the smallest process satisfying $(X^{\pi})^\gamma J(.)$ is a supermartingale for any $\pi \in \mathcal{A}$. For that we suppose that $\bar{J}$ is an $\F$-adapted process such that $(X^{\pi})^\gamma \bar{J}$ is a supermartingale for any $\pi \in \mathcal{A}$ with $ \bar{J}_T=1$. Fix $t\in[0,T]$. For any $\pi\in \mathcal{A}$, we have $\E[(X_T^{\pi})^\gamma|\mathcal{F}_t] \leq (X_t^{\pi})^\gamma \bar{J}_t$ a.s. This inequality is equivalent to $\E[(X_T^{t,\pi})^\gamma|\Fc_t]\leq \bar{J}_t$. Which implies 
\begin{equation*}
\esssup\limits_{\pi \in \Ac_t}\E\big[(X_T^{t,\pi})^\gamma\big|\mathcal{F}_t\big] \leq \bar{J}_t~a.s.\,,
\end{equation*}
which clearly gives that $J_t\leq  \bar{J}_t$ a.s.\\

At last, we prove the optimality criterion, that is Proposition \ref{Critere optimalite}. Suppose that the strategy $\hat{\pi}$ is an optimal strategy, hence we have 
\begin{equation*}
J(0)=\sup\limits_{\pi\in\mathcal{A}}\E\big[\big(X_T^{\pi}\big)^\gamma\big]=\E\big[\big(X_T^{\hat{\pi}}\big)^\gamma\big].
\end{equation*}
As the process $(X^{\hat{\pi}})^\gamma J(.)$ is a supermartingale by Proposition \ref{J(t) plus petite surmartingale} and that $J(0)=\E[(X_T^{\hat{\pi}})^\gamma ]$, the process $(X^{\hat{\pi}})^\gamma J(.)$ is a martingale.\\
To show the converse, suppose that the process $(X^{\hat{\pi}})^\gamma J(.)$ is a martingale, then $\E[(X_T^{\hat{\pi}})^\gamma ]=J(0)$. Moreover $\E[(X_t^\pi)^\gamma J(t)]\leq J(0)$ for any $\pi\in\Ac$ by Proposition \ref{J(t) plus petite surmartingale}. Which implies that
\begin{equation*}
J(0)=\sup\limits_{\pi\in\mathcal{A}}\E\big[\big(X_T^{\pi}\big)^\gamma\big]=\E\big[\big(X_T^{\hat{\pi}}\big)^\gamma\big].
\end{equation*}

\section{Proof of Theorem \ref{Proposition BSDE J}}\label{preuve du theoreme bsde}
\setcounter{equation}{0} \setcounter{Assumption}{0}
\setcounter{Theorem}{0} \setcounter{Proposition}{0}
\setcounter{Corollary}{0} \setcounter{Lemma}{0}
\setcounter{Definition}{0} \setcounter{Remark}{0}

The proof of this theorem is based on Propositions \ref{J(t) plus petite surmartingale} and \ref{Critere optimalite}, on Doob-Meyer's decomposition and on the martingale representation theorem.\\

Since the process $J$ is a supermartingale, it can be written under the following form by using Doob-Meyer's decomposition (see \cite{delmey80}) and the martingale representation theorem
\begin{equation}\label{Doob-Meyer puissance}
dJ_t = Z_t dW_t + U_t dM_t  - dA_t \,,
\end{equation}
with $Z\in L^2_{loc}(W)$, $U\in L^1_{loc}(M)$ and $A$ is a non-decreasing $\F$-adapted process with $A_0=0$.\\ 
From product rule, the derivative of process $(X_t^{\pi})^\gamma J$ can be written under the form 
\begin{equation*}
d((X_t^{\pi})^\gamma J_t)=(X_{t^-}^{\pi})^\gamma \big(dA^\pi_t+ dM^\pi_t\big) \,,
\end{equation*}
with $A_0^\pi=0$ and
\begin{equation}\label{expression A et M}
\left\{\begin{aligned}
dA_t^\pi & =\Big[\gamma \pi_t(\mu_tJ_t+\sigma_t Z_t)+\frac{\gamma(\gamma-1)}{2}\pi_t^2\sigma_t^2J_t+\lambda_t((1+\pi_t\beta_t)^\gamma-1)(J_t+U_t)\Big]dt-dA_t \,,\\
dM_t^\pi & =(\gamma \pi_t\sigma_tJ_t+Z_t)dW_t+(U_t+((1+\pi_t\beta_t)^\gamma-1)(J_t+U_t))dM_t \,.
\end{aligned}\right.
\end{equation}
From Proposition \ref{J(t) plus petite surmartingale}, we have $dA_t^\pi\leq 0$ for any $\pi\in \Ac$, which implies
\begin{equation*}
dA_t\geq \esssup\limits_{\pi\in \Ac}\Big\{\gamma \pi_t(\mu_tJ_t+ \sigma_tZ_t )+\frac{\gamma(\gamma-1)}{2}\pi_t^2\sigma_t^2J_t+\lambda_t((1+\pi_t\beta_t)^\gamma-1)(J_t+U_t)\Big\}dt \,.
\end{equation*}
From \cite{krasch99}, there exists an optimal strategy $\hat{\pi} \in \Ac$ to the optimization problem and from Proposition \ref{Critere optimalite}, we get 
\begin{equation*}
dA_t=\Big[\gamma \hat{\pi}_t(\mu_tJ_t+ \sigma_tZ_t )+\frac{\gamma(\gamma-1)}{2}\hat{\pi}_t^2\sigma_t^2J_t+\lambda_t((1+\hat{\pi}_t\beta_t)^\gamma-1)(J_t+U_t)\Big]dt \,.
\end{equation*}
Which imply that
\begin{equation} \label{annexe egalite A}
dA_t= \esssup\limits_{\pi\in \Ac}\Big\{\gamma \pi_t(\mu_tJ_t+ \sigma_tZ_t )+\frac{\gamma(\gamma-1)}{2}\pi_t^2\sigma_t^2J_t+\lambda_t((1+\pi_t\beta_t)^\gamma-1)(J_t+U_t)\Big\}dt \,.
\end{equation}
Therefore the process $(J, Z, U)$ is a solution of the BSDE (\ref{BSDE puissance}). \\

We now prove that it is the minimal solution.
Let $(\bar{J}, \bar{Z}, \bar{U})$ be a solution of the BSDE (\ref{BSDE puissance}). Let us prove that $(X^{\pi})^\gamma \bar{J}$ is a supermartingale for any $\pi \in \mathcal{A}$. From the product rule, we can write the derivative of this process under the form
\begin{equation}\label{derive 1}
d\left(\left(X_t^{\pi}\right)^\gamma \bar{J}_t\right)=(X_{t^-}^{\pi})^\gamma\left[d\bar{M}^\pi_t+d\bar{A}^\pi_t-d\bar{A}_t\right] \,,
\end{equation}
where $\bar{A}$ and $\bar{M}^\pi$ are given by (\ref{annexe egalite A}) and \ref{expression A et M} with $(J, Z, U)$ replaced by $(\bar{J}, \bar{Z}, \bar{U})$, and $\bar{A}_0^\pi=0$ and 
\begin{equation*}
d\bar{A}_t^\pi = \Big[ \gamma \pi_t(\mu_t\bar{J}_t+\sigma_t\bar{Z}_t)+\frac{\gamma(\gamma-1)}{2}\pi_t^2\sigma_t^2\bar{J}_t+\lambda_t((1+\pi_t\beta_t)^\gamma-1)(\bar{J}_t+\bar{U}_t)\Big]dt \,.
\end{equation*}
By integrating (\ref{derive 1}), we get
\begin{equation*}
(X ^{\pi}_t)^\gamma \bar{J}_t-\bar{J}_0=\int_0^t(X^{\pi}_{s^-})^\gamma d\bar{M}^\pi_s-\int_0^t(X_{s}^{\pi})^\gamma (d\bar{A}_s-d\bar{A}_s^\pi)\,.
\end{equation*}
As $d\bar{A}_s\geq d\bar{A}_s^\pi$, we have $\int_0^t(X^{\pi}_{s^-})^\gamma d\bar{M}^\pi_s\geq (X_t^{\pi})^\gamma \bar{J}_t-\bar{J}_0\geq -\bar{J}_0$. It implies that $\bar{M}^\pi$ is a supermartingale, since it is a lower bounded local martingale. Hence, the process $(X^{\pi})^\gamma \bar{J}$ is a supermartingale for any $\pi \in \mathcal{A}$, because it is the sum of a supermartingale and a non-increasing process. Proposition \ref{J(t) plus petite surmartingale} implies that $J_t \leq \bar{J}_t,~\forall\,t\in [0,T]$ a.s., which ends this proof.

\section{Proof of Theorem \ref{J limite}}\label{preuve du theoreme approximation}
\setcounter{equation}{0} \setcounter{Assumption}{0}
\setcounter{Theorem}{0} \setcounter{Proposition}{0}
\setcounter{Corollary}{0} \setcounter{Lemma}{0}
\setcounter{Definition}{0} \setcounter{Remark}{0}

We first remark that $J^k$ satisfies the following property:

\begin{Lemma}\label{Jk surmartingale}
The process $J^k$ is the smallest $\F$-adapted process such that $(X^{\pi})^\gamma J^k$ is a supermartingale for any $\pi \in \Ac^k$ with $J^k_t=1$.
\end{Lemma}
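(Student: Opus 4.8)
The plan is to imitate the proof of Proposition \ref{J(t) plus petite surmartingale} given in Appendix \ref{Preuve des propositions}, the only extra point to check being that every construction stays compatible with the constraint $\|\pi\|\le k$. First I would establish the analogue of Lemma \ref{stable} for bounded strategies, namely that the family $\{J^\pi_t:\pi\in\Ac^k_t\}$ is stable under pairwise supremum. Indeed, given $\pi^1,\pi^2\in\Ac^k_t$, set $E=\{J^{\pi^1}_t\ge J^{\pi^2}_t\}\in\Fc_t$ and $\pi_s=\pi^1_s\1_E+\pi^2_s\1_{E^c}$ for $s\in[t,T]$; this process is $\F$-predictable and still bounded by $k$, hence belongs to $\Ac^k_t$, and by construction $J^\pi_t=J^{\pi^1}_t\vee J^{\pi^2}_t$. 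Consequently, by the classical properties of the essential supremum (see \cite{nev75}), there is a nondecreasing sequence $(\pi^n)_{n\in\N}\subset\Ac^k_t$ with $J^k(t)=\lim_n\uparrow J^{\pi^n}_t$, $\P-a.s.$

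Next I would show that $((X_t^{\pi})^\gamma J^k(t))$ is a supermartingale for every $\pi\in\Ac^k$, i.e. that $\E[(X_t^{s,\pi})^\gamma J^k(t)\,|\,\Fc_s]\le J^k(s)$ for $s\le t$. Fix such a $\pi$ and the approximating sequence $(\pi^n)\subset\Ac^k_t$ from the first step, and define $\tilde{\pi}^n_u=\pi_u\1_{[s,t]}(u)+\pi^n_u\1_{(t,T]}(u)$; this is admissible and bounded by $k$, so $\tilde{\pi}^n\in\Ac^k_s$. Using the multiplicative flow property of the wealth process, $(X^{s,\tilde{\pi}^n}_T)^\gamma=(X^{s,\pi}_t)^\gamma(X^{t,\pi^n}_T)^\gamma$, together with the tower property, one gets $\E[(X^{s,\tilde{\pi}^n}_T)^\gamma|\Fc_s]=\E[(X^{s,\pi}_t)^\gamma J^{\pi^n}_t|\Fc_s]$. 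Letting $n\to\infty$ and applying the monotone convergence theorem yields $\E[(X^{s,\pi}_t)^\gamma J^k(t)|\Fc_s]=\lim_n\uparrow\E[(X^{s,\tilde{\pi}^n}_T)^\gamma|\Fc_s]\le J^k(s)$, the last inequality being the very definition of $J^k(s)$ (as an essential supremum over $\Ac^k_s$). Together with $J^k(T)=1$, this gives the supermartingale property.

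Finally, for the minimality, let $(\bar{J}_t)$ be an $\F$-adapted process with $\bar{J}_T=1$ such that $((X^\pi_t)^\gamma\bar{J}_t)$ is a supermartingale for every $\pi\in\Ac^k$. Fix $t\in[0,T]$. For each $\pi\in\Ac^k$, the supermartingale property between $t$ and $T$ gives $\E[(X^\pi_T)^\gamma|\Fc_t]\le(X^\pi_t)^\gamma\bar{J}_t$, equivalently $\E[(X^{t,\pi}_T)^\gamma|\Fc_t]\le\bar{J}_t$. Taking the essential supremum over $\pi\in\Ac^k_t$ yields $J^k(t)\le\bar{J}_t$, $\P-a.s.$, which proves that $(J^k(t))$ is the smallest such process. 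The main obstacle is the dynamic programming step in the second paragraph: one must make sure the concatenated strategy is admissible and bounded by $k$, and carefully justify the interchange of the limit, the conditional expectation, and the essential supremum, which is precisely where the stability lemma and monotone convergence are used; the remaining arguments are a verbatim transcription of those in Appendix \ref{Preuve des propositions}.
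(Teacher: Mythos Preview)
Your proposal is correct and follows exactly the paper's approach: the paper's proof of this lemma is a one-line remark that the arguments of Proposition~\ref{J(t) plus petite surmartingale} carry over verbatim because Lemma~\ref{stable} remains valid with $\Ac^k_t$ in place of $\Ac_t$, and you have faithfully spelled out those details, including the only new point (that the concatenated strategies remain bounded by $k$).
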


\noindent To prove this lemma, we use exactly the same arguments as in the proof of Proposition \ref{J(t) plus petite surmartingale}, since Lemma \ref{stable} is still true with $\Ac^k_t$ instead of $\Ac_t$. \\

Fix $t\in[0,T]$. It is obvious with the definition of sets $\Ac_t$ and $\mathcal{A}^k_t$ that $\mathcal{A}^k_t \subset \Ac_t$ for each $k\in \N$ and hence
\begin{equation}\label{inegalite J Jk}
 J^k_t\leq J_t~a.s.
\end{equation}
Moreover, since $\Ac^k_t \subset \Ac^{k+1}_t$ for each $k\in \N$, it follows that the positive sequence $(J^k)_{k\in \mathbb{N}}$ is non-decreasing. Let us define the random variable
\begin{equation*}
\tilde{J}(t)=\lim\limits_{k\rightarrow \infty} \uparrow J^k_t~a.s.
 \end{equation*}
It is obvious that $ \tilde{J}(t) \leq J_t$ a.s. from (\ref{inegalite J Jk}) and this holds for any $t\in[0,T]$. It remains to prove that for any $t\in[0,T]$, $J_t\leq \tilde{J}(t)$ a.s. As in the proof of Theorem 4.2 of the companion paper \cite{limque09}, we first prove that the process $\tilde{J}(t^+)$ is c\`{a}d-l\`{a}g and satisfies $\tilde{J}(t^+)\leq \tilde{J}(t)$ a.s. The process $((X_t^\pi)^\gamma \tilde{J}(t^+))$ is a supermartingale for any bounded strategy $\pi\in \Ac$. In the sequel, we shall denote $\bar{J}_t$ instead of $\tilde{J}(t^+)$. We now prove that $\bar{J}_t\geq J_t,~\forall\,t\in [0,T]$ a.s. Since $\bar{J}$ is a c\`{a}d-l\`{a}g supermartingale, it admits the following Doob-Meyer's decomposition
\begin{equation*}
d\bar{J}_t=\bar{Z}_tdW_t + \bar{U}_tdM_t - d\bar{A}_t \,,
\end{equation*}
with $\bar{Z}\in L^2_{loc}(W)$, $\bar{U}\in L^1_{loc}(M)$ and $\bar{A}$ is a non-decreasing $\G$-adapted process with $\bar{A}_0=0$. As before, we use the fact that the process $(X^\pi)^\gamma\bar{J}$ is a supermartingale for any bounded strategy $\pi\in \Ac$ to give some conditions satisfied by the process $\bar{A}$. Let $\pi\in \Ac$ be a uniformly bounded strategy, the product rule gives
\begin{equation} \label{derive 2}
d((X_t^{\pi})^\gamma \bar{J}_t)=(X_{t^-}^{\pi})^\gamma\big(d\bar{A}^\pi_t+d\bar{M}^\pi_t\big) \,,
\end{equation}
where $\bar{A}^\pi$ and $\bar{M}^\pi$ are given by (\ref{expression A et M}) with $(J, Z, U, A)$ replaced by $(\bar{J}, \bar{Z}, \bar{U},  \bar{A})$.

Let $\bar{\mathcal{A}}_t$ be the subset of uniformly bounded strategies of $\Ac_t$. Since the process $(X^\pi)^\gamma\bar{J}$ is a supermartingale for any $\pi \in \bar{\Ac}$, we have 
 \begin{equation}\label{inegalite Abar}
d\bar{A}_t\geq\esssup_{\pi\in\bar{\Ac}}\Big\{\gamma \pi_t(\mu_t\bar{J}_t+\sigma_t \bar{Z}_t)+\frac{\gamma(\gamma-1)}{2}\pi_t^2\sigma_t^2\bar{J}_t+\lambda_t((1+\pi_t\beta_t)^\gamma-1)(\bar{J}_t+\bar{U}_t)\Big\}dt \,.
 \end{equation}
 It is not possible to give an exact expression of $\bar{A}_t$ as in the previous proof, because we do not know if $\hat{\pi} \in \bar{\mathcal{A}}$. But this inequality is sufficient for the demonstration.
Now, the following equality holds $dt\otimes d\P$ $a.s.$
\begin{multline}\label{esssupA=esssupAbar}
\esssup\limits_{\pi\in \bar{\mathcal{A}}}\Big\{\gamma \pi_t(\mu_t\bar{J}_t+\sigma_t \bar{Z}_t)+\frac{\gamma(\gamma-1)}{2}\pi_t^2\sigma_t^2\bar{J}_t+\lambda_t((1+\pi_t\beta_t)^\gamma-1)(\bar{J}_t+\bar{U}_t)\Big\}=\\
\esssup\limits_{\pi\in \mathcal{A}}\Big\{\gamma \pi_t(\mu_t\bar{J}_t+\sigma_t \bar{Z}_t)+\frac{\gamma(\gamma-1)}{2}\pi_t^2\sigma_t^2\bar{J}_t+\lambda_t((1+\pi_t\beta_t)^\gamma-1)(\bar{J}_t+\bar{U}_t) \Big\} \,.
\end{multline}
We now want to show that $(X^{\pi})^\gamma \bar{J}$ is a supermartingale for any $\pi\in\Ac$. Fix $\pi\in\Ac$ (not necessarily uniformly bounded), we get
\begin{equation*}
(X_t^{\pi})^\gamma \bar{J}_t-\bar{J}_0=\int_0^t(X_{s^-}^{\pi})^\gamma d\bar{M}_s^\pi+\int_0^t(X_{s}^{\pi})^\gamma d\bar{A}_s^\pi \,,
\end{equation*}
with $\bar{A}^\pi$ and $\bar{M}^\pi$ given by (\ref{expression A et M}) with $(J, Z, U, A)$ replaced by $(\bar{J}, \bar{Z}, \bar{U}, \bar{A})$.

 Inequality (\ref{inegalite Abar}) and equality (\ref{esssupA=esssupAbar}) imply that $d\bar{A}_t^\pi\leq 0$ a.s. Therefore, we have
\begin{equation*}
\int_0^t(X_{s^-}^{\pi})^\gamma d\bar{M}_s^\pi \geq (X_t^{\pi})^\gamma \bar{J}_t -\bar{J}_0\geq  -\bar{J}_0 \,.
\end{equation*}
Thus, $\bar{M}^\pi$ is a supermartingale, since it is a lower bounded local martingale. As $\bar{M}^\pi$ is a supermartingale and $\bar{A}^\pi$ is non-increasing, the process $(X^{\pi})^\gamma \bar{J}$ is a supermartingale and this holds for any $\pi \in \Ac$. Since $J$ is the smallest process (see Proposition \ref{J(t) plus petite surmartingale}) satisfying these properties, we have $J_t\leq \bar{J}_t$ a.s. Which ends the proof.

\section{Proof of Lemma \ref{representation}}\label{Preuve du theoreme de representation}
\setcounter{equation}{0} \setcounter{Assumption}{0}
\setcounter{Theorem}{0} \setcounter{Proposition}{0}
\setcounter{Corollary}{0} \setcounter{Lemma}{0}
\setcounter{Definition}{0} \setcounter{Remark}{0}

\noindent First, recall Bayes formula: for any $t\in [0,T]$ and $X \in L^1(\Omega,\mathcal{F}_t,\P)$, one has
\begin{equation}
\E\big[X\big|\mathcal{G}_t\big]=\frac{\E_\Q\big[\Lambda_tX\big|\mathcal{G}_t\big]}{\tilde{\Lambda}_t}.
\label{Bayes}
\end{equation}
Let $\xi$ be the optional projection of the $\P$-martingale $L$ to $\G$, so
\begin{equation*}
\xi_t=\E\big[L_t\big|\mathcal{G}_t\big].
\end{equation*}
By applying relation (\ref{Bayes}) to $X=L_t$, we immediately obtain $\xi_t=1/\tilde{\Lambda}_t$ and thus 
\begin{equation*}
\xi_t=\exp\Big(-\int_0^t\tilde{\theta}_sd\bar{W}_s-\frac{1}{2}\int_0^t|\tilde{\theta}_s|^2ds\Big).
\end{equation*}
Let $m$ be a $(\P,\mathbb{G})$-local martingale. From Bayes rule, the process $\tilde{m}$ defined by 
\begin{equation*}
\tilde{m}_t=m_t\xi_t^{-1}, ~~0\leq t \leq T,
\end{equation*}
is a $(\Q,\mathbb{G})$-local martingale. From Remark \ref{WM} and Lemma \ref{representation}, there exists a couple of processes $(\tilde{a},\tilde{b})$ with $\tilde{a}\in L^2_{loc}(\Wt)$ and $\tilde{b}\in L^1_{loc}(\Mb)$ such that
\begin{equation*}
\tilde{m}_t=\int_0^t\tilde{a}'_sd\tilde{W}_s+\int_0^t\tilde{b}'_sd\bar{M}_s,  ~~0\leq t \leq T.
\end{equation*}
By It\^{o}'s formula applied to $m_t=\tilde{m}_t\xi_t$, definition of $\Wb$ and $\Mb$ (see (\ref{MbarWbar})), we obtain that
\begin{equation*}
m_t=\int_0^ta'_sd\bar{W}_s+\int_0^tb'_sd\bar{M}_s,
\end{equation*}
\noindent with $a_t=\xi_t\tilde{a}_t-\tilde{m}_{t}\xi_t\tilde{\rho}_t$ and $b_t=\xi_{t^-}\tilde{b}_t$.

\end{document}